\documentclass[10pt]{article}
\AtBeginDocument{%
  }

\usepackage{amsthm}

\usepackage{amssymb,stmaryrd}
\usepackage{graphics}
\usepackage{amsfonts}
\usepackage{bbm}
\usepackage{tikz}
\usepackage{color}
\usepackage{amsmath}

\newcommand{\vx} { {\bf x}}
\newcommand{\vy} { {\bf y}}
\newcommand{\vz} { {\bf z}}

\newcommand{\one} { {\bf 1}}
\newcommand{\hx} { {\hat{\bf x}}}

\def\e{{\boldsymbol \varepsilon}}

\def\<#1>{\langle#1\rangle}

\def\F{\mathbb F}
\def\Z{\mathbb Z}
\def\N{\mathbb N}
\def\Q{\mathbb Q}

\def\a{\alpha}
\def\w{\omega}
\def\si{\sigma}
\def\lam{\lambda}
\def\la{\langle}
\def\ra{\rangle}
\def\tta{{\tilde\theta}}
\def\e{\varepsilon}

\newcommand{\bN} { {\mathbb{N}}}
\newcommand{\bC} { {\mathbb{C}}}
\newcommand{\bQ} { {\mathbb{Q}}}
\newcommand{\bZ} { {\mathbb{Z}}}

\newcommand{\bF} { {\mathbb{F}}}
\newcommand{\bK} { {\mathbb{K}}}
\newcommand{\bE} { {\mathbb{E}}}

\DeclareMathOperator{\rank}{rank}

\newtheorem{thm}{Theorem}[section]

\newtheorem{lem}[thm]{Lemma}
\newtheorem{prop}[thm]{Proposition}

\newtheorem{rem}[thm]{Remark}
\newtheorem{exam}[thm]{Example}

\newtheorem{problem}[thm]{Problem}

\makeatletter \@addtoreset{equation}{section}
\begin{document}

\title{On the Summability Problem of Multivariate Rational Functions in the Mixed Case
\thanks{This paper is dedicated to Professor George Labahn on the occasion of his 75th birthday.
	S.\ Chen, H.\ Fang, and Y.\ Wang were partially supported by the National Key R\&D Program of China (No.\ 2023YFA1009401), the NSFC grant (No.\ 12271511), the CAS Project for Young Scientists in Basic Research (No.\ YSBR-034) and the Strategic Priority Research Program of the Chinese Academy of Sciences (No.\ XDB0510201). L.\ Du was supported by the Austrian FWF grant 10.55776/PAT9952223. Y.\ Wang was supported by the Austrian FWF grant 10.55776/I6130.
	This work was supported by the International Partnership Program of Chinese Academy of Sciences (No.\ 167GJHZ2023001FN)
}
}


\author{
	Shaoshi Chen$^{a, b}$, Lixin Du$^{c}$, \\
	\bigskip
	Hanqian Fang$^{a, b}$ and Yisen Wang$^{a,b,d}$\\
	$^a$KLMM,\, Academy of Mathematics and Systems Science, \\ Chinese Academy of Sciences, \\100190 Beijing, China\\
	$^b$School of Mathematical Sciences, \\University of Chinese Academy of Sciences,\\100049 Beijing, China\\
	$^c$Institute for Algebra, \\
	Johannes Kepler University,\\
	4040 Linz, Austria\\
	$^d$RICAM, Austrian Academy of Sciences, \\
	\medskip
	4040 Linz, Austria\\	
	{\sf schen@amss.ac.cn, lixindumath@gmail.com}\\
	{\sf hqfang\_math@163.com, wangyisen@amss.ac.cn}
}

\maketitle

\begin{abstract}
  Continuing previous work, this paper focuses on the summability problem of multivariate rational functions in the mixed case in which both shift and $q$-shift operators can appear. Our summability criteria rely on three ingredients including orbital decompositions, Sato's isotropy groups, and difference transformations. This work settles the rational case of the long-term project aimed at developing algorithms for symbolic summation of multivariate functions.   
\end{abstract}

\maketitle

\section{Introduction}

As a classical and active topic in symbolic computation, symbolic summation aims at providing algorithmic tools for verifying or discovering  identities and closed forms for various sums from combinatorics~\cite{PWZbook1996, KauersPaule2011, Koepf2014}, computer science~\cite{GKP1994} and theoretical physics~\cite{Schneider2006, Schneider2012}. The summability problem determines whether a given function (or sequence) is a difference of another function (or sequence) so that the discrete Newton--Leibniz formula can be applied to compute definite sums.  
Its continuous analogue in the differential setting was related to the effective computation of de Rham cohomologies. For instance, Picard and Simart in their book~\cite[pp.\ 475--479]{Picard1897} proposed to decide whether a rational function $f(x, y, z)\in \bC(x, y, z)$ of three variables can be written as
\[f(x, y, z) = \frac{\partial u}{\partial x}+ \frac{\partial v}{\partial y}  + \frac{\partial w}{\partial z},   \]
where $u, v, w \in \bC(x, y, z)$. This is still an open problem and the related results can be found in~\cite{Griffiths1969,Dimca1991,Lairez2016,ChenCoxWang2026}. The goal of this paper is to study the discrete and $q$-discrete analogues of Picard’s problem for multivariate rational functions.

The development of algorithms for symbolic summation dates back to the early 1970s with significant advances in subsequent decades~\cite[Chapter~23]{MCA2003}. 
In the univariate case, the summability problem was studied by Abramov~\cite{Abramov1971, Abramov1975, Abramov1995b} for rational functions and also by Gosper~\cite{Gosper1978} for hypergeometric terms. 
Karr extended Risch's algorithm to the setting of so-called $\Pi\Sigma$-extensions~\cite{Karr1981, Karr1985} with a series of further developments by Schneider and his collaborators~\cite{Schneider2004, Schneider2016} motivated by computational problems in quantum field theory~\cite{Schneider2012, Schneider2013QFT}.
Extending these results to the multivariate setting was proposed as an intriguing problem by Andrews and Paule~\cite{Andrews1993}, since its solution would help us reduce multiple sums to single sums. Initial progress was made in~\cite{ChenHouMu2006} which provides some necessary conditions on the summability of bivariate hypergeometric terms. The summability problem in the case of bivariate rational functions was solved in~\cite{ChenSinger2014} with later algorithmic improvements in~\cite{HouWang2015,Wang2021}. Beyond the bivariate case, this problem has been studied for binomial sums~\cite{BostanLairezSalvy2017}, and a complete solution to the summability problem of multivariate rational functions involving {ordinary} shift operators was recently given in~\cite{ChenDuFang2025}.

As a direct and complete generalization of the results in~\cite{ChenDuFang2025}, this paper studies the summability problem of multivariate rational functions in the mixed case in which both shift and $q$-shift operators can appear.Although the overall structure and strategy remain the same, several technical modifications are required. The first new idea is to introduce the concepts of normal and special polynomials from symbolic integration, following~\cite{ChenDuGaoHuangLi2025}. In the ordinary shift case, every irreducible polynomial is normal, while in the $q$-shift case, a special polynomial may occur. Therefore, by treating this special case separately, we can handle both types of operators within a unified framework. Another new ingredient is a direct decomposition of isotropy groups into an additive difference part and a renormalized $q$-difference part, which allows one to compute a basis of an isotropy group componentwise.
Furthermore, there is a conceptual novelty in the final step, namely the construction of difference transformations. In the ordinary shift case, the relevant transformation is affine-linear, while in the $q$-shift case, a multiplicative change of variables is needed; this may involve passing to an algebraic closure before being pulled back. Finally, we construct a single mixed endomorphism that combines the classical additive transformation with the new multiplicative one.

The remainder of this paper is organized as follows. In Section~\ref{SEC:pre}, we introduce the basic definitions and notation to formally state the rational summability problem. Section~\ref{SEC:adddecomp} shows that the problem can be reduced to that of simple fractions via orbital decompositions. Section~\ref{SEC:iso} presents Sato's theory of isotropy groups, {which is then used in Section~\ref{SEC:criteria} to derive the summability criteria, thereby reducing the problem to a summability problem with fewer operators in a more general setting}. In Section~\ref{SEC:tranformation}, we construct an $\bF$-endomorphism to transform the general ($q$-)summability problem into a standard form. Section~\ref{SEC:examples} presents two examples that illustrate {potential} applications in verifying the convergence and irrationality of given series. We conclude the paper in Section~\ref{SEC:conclusion} by proposing some problems for future studies.

\section{Preliminaries}\label{SEC:pre}
The goal of this section is to introduce some notation and state the main problem addressed in this paper. 
Let $\bF$ be a field of characteristic zero and $\bF(\vx)$ be the field of rational functions in $\vx=\{x_1,\ldots,x_n\}$ over~$\bF$. We use $\hx_1$ to denote the $n-1$ variables~$x_2,\ldots,x_n$. For each~$v\in \vx$, the shift operator $\si_{v}$ is the $\bF$-automorphism of $\bF(\vx)$ defined by $\si_{v}(v)=v+1$ and $\si_{v}(w)=w$ for all~$w\in\vx\setminus\{v\}$. Let $q\in\bF^*:=\bF\setminus\{0\}$ be such that $q^m\neq1$ for all nonzero~$m\in\Z$. The $q$-shift operator $\tau_{q,v}$ is defined as the $\bF$-automorphism of $\bF(\vx)$ such that $\tau_{q,v}(v)=q\cdot v$ and $\tau_{q,v}(w)=w$ for all~$w\in\vx\setminus\{v\}$. Let $\theta_v\in\{\si_v,\tau_{q,v}\}$ and $G=\langle \theta_{x_1}, \ldots, \theta_{x_n}\rangle$ be the free multiplicative abelian group generated by the operators $\theta_{x_1},\ldots,\theta_{x_n}$. The group algebra $\bF[G]$ over the field $\bF$ consists of all finite linear combinations $\sum a_\theta \theta$ with $a_{\theta}\in \bF$ and~$\theta\in G$. For each~$\theta\in G$, we use $\tta$ to replace $c\cdot\theta$ for some $c\in\bF^*$ for short and the difference operator $\tta-{\bf 1}$ is denoted by~$\Delta_{\tta}$, where ${\bf 1}$ stands for the identity map on~$\bF(\vx)$. 
The main task of this paper is to solve the following problem.
\begin{problem}\label{PROB:summabilityproblem}
	Given a rational function $f\in \bF \left(\mathbf{x}\right)$ and $\tta_{x_i}=c_i\cdot\theta_{x_i}$ for some constant $c_i \in \bF^*$ with $i=1,\ldots,n$, decide whether there exist $g_1,\ldots,g_n\in\bF(\vx)$ such that
	\begin{equation}\label{EQ:summbility}
		f=\Delta_{\tta_{x_1}}(g_1)+\cdots+\Delta_{\tta_{x_n}}(g_n).
	\end{equation}
	If such $g_i$'s exist, we say that $f$ is {\em ($\tta_{x_1},\ldots,\tta_{x_n}$)-summable} in $\bF(\vx)$.
\end{problem}
Let $\vx=\vy\cup\vz$ with $\vy=\{y_1,\ldots,y_k\}, \vz=\{z_1,\ldots,z_m\}$ and $n=k+m$. If $\theta_{y_j}=\si_{y_j}$ for $1\leq j\leq k$,  $\theta_{z_\ell}=\tau_{q,z_\ell}$ for $1\leq \ell\leq m$ and $c_i=1$ for $1\leq i\leq n$, then Problem~\ref{PROB:summabilityproblem} is reduced to deciding whether $f$ is  $(\si_{y_1},\ldots,\si_{y_k},\tau_{q,z_1},\ldots,\tau_{q,z_m})$-summable.
In general, let $\theta_1,\ldots,\theta_r$ be a family of independent elements in $G$, that means if $\theta_1^{\ell_1}\cdots\theta_r^{\ell_r}=\bf1$ for some $\ell_1,\ldots,\ell_r\in \Z$ then $\ell_i=0$ for all~$1\leq i\leq r$. Let $\tta_i=c_i\theta_i$ for some $c_i\in\bF^*$ with $i=1,\ldots,r$.  A rational function $f\in\bF(\vx)$ is called {\em $(\tta_1,\ldots,\tta_r)$-summable} if $f=\Delta_{\tta_1}(g_1)+\cdots+\Delta_{\tta_r}(g_r)$ for some $g_1,\ldots,g_r\in \bF(\vx)$. The $(\tta_1,\ldots,\tta_r)$-summability problem is to decide whether $f$ is $(\tta_1,\ldots,\tta_r)$-summable.

Let $\bE=\bF(\hx_1)$. As an analog in the differential case (see~\cite[Definition~1.3]{BRONSTEIN1990}), we say that $p\in\bE[x_1]$ is {\em normal} with respect to $\theta_{x_1}$ if $\gcd(p,\theta_{x_1}^\ell(p))=1$ for any $\ell\in\Z\setminus\{0\}$. A non-normal polynomial with respect to $\theta_{x_1}$ is called {\em special} with respect to $\theta_{x_1}$. Normal polynomials and special polynomials will be further discussed in Sections \ref{SEC:adddecomp} and~\ref{SEC:criteria}. 

Let $\theta_{\vx}=(\theta_{x_1},
\ldots,\theta_{x_n})$ and $\alpha=(a_1,\ldots,a_n)\in\Z^n$. We introduce a short notation { $\theta_{\vx}^{\alpha}:=\theta_{x_1}^{a_1}\cdots\theta_{x_n}^{a_n}$} to denote an element in $ G$. For $\mathbf{c}=(c_1,\ldots,c_n)\in\bF^n$, define $\mathbf{c}^\alpha=c_1^{a_1}\cdots c_n^{a_n}$ to be a constant in $\bF$.
\section{Additive decompositions}\label{SEC:adddecomp}
In this section, we shall introduce a direct sum decomposition of $\bF(\vx)$ into $G$-invariant subspaces, and thus reduce the summability problem of general rational functions to the case of simple fractions.

\subsection{Orbital decompositions}
Let $p,q\in \bF[\vx]$ be two irreducible polynomials in $x_1$ over $\bE=\bF(\hx_1)$. We say that $p,q$ are {\em associate} if $p=c\cdot q$ for some~$c\in \bF^*$. This is an equivalence relation. Let $\Omega$ be the set of all such equivalence classes $[p]$ and $K$ be a subgroup of $G$. First of all, we shall introduce a group action of $K$ on the set $\Omega$. Since each $\theta\in K$ is an $\bF$-automorphism of $\bF(\vx)$, this naturally defines a group action on $\Omega$ by mapping $[p]$ into $[\theta(p)]$. We call the set
\[[p]_K:= \left\{[\theta(p)]\mid \theta\in K\right\}\]
the  {\em $K$-orbit} of $p$. The polynomials $p,q$ are said to be {\em $K$-equivalent} if $[p]_K=[q]_K$, denoted by $p\sim_K q$. Then $p,q$ are $K$-equivalent if and only if $p=c\cdot \theta(q)$ for some $c\in\bF^*$ and $\theta\in K$. The relation $\sim_K$ is an equivalence relation.

Now we shall decompose $\bF(\vx)$ as a vector space over $\bE$. Let $K=G$ and by the above group action, the set $\Omega$ can be partitioned into the disjoint union of the distinct $G$-orbits. Given an irreducible polynomial $d\in\bF[\vx]$ with $\deg_{x_1}(d)>0$ and a positive integer~$j$, we define an $\bE$-subspace $V_{[d]_G,j}$ of $\bF(\vx)$
spanned by all of the fractions $a/b^j$ with $a\in \bE[x_1]$, $b\sim_G d$, and~$\deg_{x_1}(a)<\deg_{x_1}(d)$.
For any fraction in $V_{[d]_G,j}$, the irreducible factors of its denominator are in the same $G$-orbit as $d$. By the irreducible partial fraction decomposition, any rational function $f\in\bF(\vx)$ can be uniquely decomposed as
$f=f_0+f_1+\cdots+f_s $
with $f_0\in\bE[x_1]$ and $f_1,\ldots,f_s$ in distinct $V_{[d]_G,j}$ spaces. Therefore, $\bF(\vx)$ admits the following direct sum decomposition:
\begin{equation}\label{EQ:adddecomp}
	\bF(\vx)=\bE[x_1]\bigoplus\Big(\bigoplus_{j\in\N^+}\bigoplus_{[d]_G\in T}V_{[d]_G,j}\Big),
\end{equation}
where $\N^+:=\N\setminus\{0\}$ and $T$ is the set of all distinct $G$-orbits such that $\Omega$ is the disjoint union of $T$. We use $f_0$ and $f_{[d]_G,j}$ to denote the components of $f$ in $\bE[x_1]$ and $V_{[d]_G,j}$, respectively.
Such a direct sum decomposition is called the \emph{orbital decomposition} of~$\bF(\vx)$ with respect to the variable~$x_1$ and the group~$G$.
\begin{lem}\label{LEM:G-invariant}
	If $f\in V_{[d]_G,j}$ and $P\in \bE[G]$, then $P(f)\in V_{[d]_G,j}$.
\end{lem}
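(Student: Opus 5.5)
By linearity it suffices to treat the simplest case. Here $P=\sum_{\theta}e_\theta\,\theta$ with $e_\theta\in\bE$, acting as $P(f)=\sum e_\theta\,\theta(f)$. Also $V_{[d]_G,j}$ is an $\bE$-subspace. So it is enough to show that $\theta(a/b^j)\in V_{[d]_G,j}$ for a single spanning fraction $a/b^j$ with $a\in\bE[x_1]$, $b\sim_G d$, $\deg_{x_1}(a)<\deg_{x_1}(d)$, and a single $\theta\in G$.

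Write $\theta=\theta_{x_1}^{\ell}\,\theta'$, where $\theta'$ involves only $\theta_{x_2},\ldots,\theta_{x_n}$. The key observation is that every generator $\theta_{x_i}$ maps $\bE[x_1]$ to itself and preserves $x_1$-degree.
\begin{itemize}
\item For $i\ge2$, $\theta_{x_i}$ restricts to an automorphism of $\bE=\bF(\hx_1)$ and fixes $x_1$.
\item $\theta_{x_1}$ fixes $\bE$ and sends $x_1$ to $x_1+1$ or $qx_1$.
\end{itemize}
Hence $\theta(a)\in\bE[x_1]$ with $\deg_{x_1}\theta(a)=\deg_{x_1}a<\deg_{x_1}d$.

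Similarly $\theta(b)$ lies in $\bE[x_1]$ and is irreducible over $\bE$, since $\theta$ restricts to a ring automorphism of $\bE[x_1]$. The one point needing care is that orbit elements were taken as classes of polynomials in $\bF[\vx]$, whereas $\theta(b)$ may have denominators in $\hx_1$ only through $q$-shifts. But $\theta_{x_i}$ maps $\bF[\vx]$ into itself: $x_i\mapsto x_i+1$ or $qx_i$ with $q\in\bF^*$. Therefore $\theta(b)\in\bF[\vx]$, and $\theta(b)\sim_G b\sim_G d$ by transitivity.

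Consequently
\[
\theta\!\left(\frac{a}{b^j}\right)=\frac{\theta(a)}{\theta(b)^j}
\]
is again a spanning fraction of $V_{[d]_G,j}$. Multiplying by $e_\theta\in\bE$ and summing stays in the $\bE$-span, which gives $P(f)\in V_{[d]_G,j}$.

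The only potential obstacle is bookkeeping rather than substance. We must confirm that the degree condition is phrased relative to $\deg_{x_1}(d)$, which is invariant along the orbit because $\theta$ preserves $x_1$-degree. We must also confirm that the $\bE$-coefficients of $a$ are mapped into $\bE$ by $\theta$. Both hold by the degree-preservation remark above.
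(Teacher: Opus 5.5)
Your proposal is correct and follows the same route as the paper. Both write $f$ as a sum of spanning fractions $a/b^j$, observe that each $\theta\in G$ keeps $\theta(b)$ in the $G$-orbit of $d$ and satisfies $\deg_{x_1}\theta(a)<\deg_{x_1}(d)$, and then conclude by linearity over $\bE$. Your extra remarks, that $\theta$ maps $\bE$ into $\bE$ and $\bF[\vx]$ into $\bF[\vx]$, only make explicit what the paper leaves implicit.
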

\begin{proof}
	Let $f=\sum a_i/b_i^j$ with $b_i\sim _G d$, $\deg_{x_1}(a_i)<\deg_{x_1}(d)$ and $P=\sum{p_\theta}\theta$ with $p_\theta\in\bE$. For any $\theta\in G$, we have $\theta(b_i)$ and $d$ are still in the same $G$-orbit and $\deg_{x_1}(\theta(a_i))<\deg_{x_1}(d)$. Then $\frac{p_\theta \theta(a_i)}{\theta(b_i)^j}$ is in $V_{[d]_G,j}$. So $P(f)$ lies in~$ V_{[d]_G,j}$ by the linearity.
\end{proof}
\begin{lem}\label{LEM:red1}
	Let $f\in\bF(\vx)$. Then $f$ is $(\tta_{x_1},\ldots,\tta_{x_n})$-summable if and only if $f_0$ is $(\tta_{x_1},\ldots,\tta_{x_n})$-summable in~$\bE[x_1]$ and $f_{[d]_G,j}$ is $(\tta_{x_1},\ldots,\tta_{x_n})$-summable for all $[d]_G\in T$ and $j\in \N^+$.
\end{lem}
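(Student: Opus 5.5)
The plan is to exploit the fact that the orbital decomposition~\eqref{EQ:adddecomp} is a direct sum of subspaces that are stable under every difference operator $\Delta_{\tta_{x_i}}$. This stability needs two ingredients: Lemma~\ref{LEM:G-invariant} for the spaces $V_{[d]_G,j}$, and a direct check for the polynomial part $\bE[x_1]$. Once stability is in place, both directions reduce to comparing components in the direct sum.

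First I would record that each $\Delta_{\tta_{x_i}}=c_i\theta_{x_i}-\mathbf{1}$ lies in $\bE[G]$, since $c_i\in\bF^*\subseteq\bE$. By Lemma~\ref{LEM:G-invariant}, $\Delta_{\tta_{x_i}}$ maps each $V_{[d]_G,j}$ into itself. For $\bE[x_1]$, I note that $\theta_{x_i}$ sends $\bE[x_1]$ to $\bE[x_1]$:
\begin{itemize}
\item for $i=1$, $\theta_{x_1}$ maps $x_1$ to $x_1+1$ or $qx_1$ and fixes $\bE$;
\item for $i\ge 2$, $\theta_{x_i}$ fixes $x_1$ and restricts to an automorphism of $\bE$.
\end{itemize}
Hence $\Delta_{\tta_{x_i}}(\bE[x_1])\subseteq\bE[x_1]$. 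Linearity over $\bE$ is not actually needed; additivity of the operators and of the decomposition is enough.

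For the ``if'' direction, suppose $f_0=\sum_i\Delta_{\tta_{x_i}}(h_{0,i})$ with $h_{0,i}\in\bE[x_1]$, and $f_{[d]_G,j}=\sum_i\Delta_{\tta_{x_i}}(h_{[d],j,i})$. Only finitely many components of $f$ are nonzero, so I take the zero certificate for the vanishing ones. Then $g_i:=h_{0,i}+\sum h_{[d],j,i}$ is a finite sum and satisfies $f=\sum_i\Delta_{\tta_{x_i}}(g_i)$.

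For the ``only if'' direction, suppose $f=\sum_i\Delta_{\tta_{x_i}}(g_i)$. I decompose each $g_i=g_{i,0}+\sum g_{i,[d]_G,j}$ according to~\eqref{EQ:adddecomp} and apply $\Delta_{\tta_{x_i}}$ termwise. By the stability above, $\sum_i\Delta_{\tta_{x_i}}(g_{i,0})$ lies in $\bE[x_1]$, and $\sum_i\Delta_{\tta_{x_i}}(g_{i,[d]_G,j})$ lies in $V_{[d]_G,j}$. Uniqueness of the decomposition of $f$ then gives $f_0=\sum_i\Delta_{\tta_{x_i}}(g_{i,0})$ with $g_{i,0}\in\bE[x_1]$, and likewise for each $V_{[d]_G,j}$-component.

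The only point needing care is the meaning of ``summable in $\bE[x_1]$'' for $f_0$: the certificates should be polynomials in $x_1$. Projecting each $g_i$ onto its $\bE[x_1]$-component supplies exactly such certificates, so I expect no real obstacle. The whole argument rests on the invariance of $\bE[x_1]$, which I verify directly as above.
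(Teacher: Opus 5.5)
Your proposal is correct and matches the paper's proof. In both, the certificates $g_i$ are decomposed along~\eqref{EQ:adddecomp}, Lemma~\ref{LEM:G-invariant} keeps each $V_{[d]_G,j}$-component in its own summand, and uniqueness of the orbital decomposition then gives the componentwise identities, with the stability of $\bE[x_1]$ under each $\Delta_{\tta_{x_i}}$ (which you verify explicitly, and the paper uses tacitly) ensuring that the projected certificates $g_{i,0}$ witness summability of $f_0$ inside $\bE[x_1]$.
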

\begin{proof}
	The sufficiency is due to the additivity of ($q$-)shift operators. For the necessity, suppose $f=\sum_{i=1}^n\Delta_{\tta_{x_i}}(g^{(i)})$ with $g^{(i)}\in\bF(\vx)$. By the additive decomposition of rational functions in~\eqref{EQ:adddecomp}, we can write $f, g^{(i)}$ in the form
	\small
	\begin{equation*}
		f=f_0+\sum_{j}\sum_{[d]_G} f_{{[d]_G,j}} \text{ and } g^{(i)}=g^{(i)}_{0}+\sum_{j}\sum_{[d]_G} g^{(i)}_{{[d]_G,j}} \text{ for } 1\leq i\leq n.
	\end{equation*}
	\normalsize
	By the additivity of $\Delta_{\tta_{x_i}}$, we see that
	\[f=\sum_{i=1}^n\Delta_{\tta_{x_i}} \big(g^{(i)}_{0}\big)+\sum_{j}\sum_{[d]_G}\bigg(\sum_{i=1}^n\Delta_{\tta_{x_i}}\big(g^{(i)}_{[d]_G,j}\big)\bigg).\]
	From Lemma~\ref{LEM:G-invariant}, it is another expression of $f$ with respect to $V_{[d]_G,j}$. By the uniqueness of the orbital decomposition~\eqref{EQ:adddecomp}, we have 
	\[
	f_0=\sum_{i=1}^n\Delta_{\tta_{x_i}} (g^{(i)}_{0}) \ \text{ and } \
	f_{[d]_G,j}=\sum_{i=1}^n\Delta_{\tta_{x_i}}(g^{(i)}_{[d]_G,j}),
	\]
	which are $(\tta_{x_1},\ldots,\tta_{x_n})$-summable. Hence the lemma follows.
\end{proof}

\subsection{Reducing to simple fractions}
By Lemma~\ref{LEM:red1}, we have reduced Problem~\ref{PROB:summabilityproblem} to that for rational functions in $\bE[x_1]$ and $V_{[d]_G,j}$. The latter one is of the form
\begin{equation}\label{EQ:red1}
	f = \sum_\theta\frac{a_{\theta}}{\theta(d)^j},
\end{equation}
where $j\in\N^+$, $\theta\in G$, $a_\theta\in \bE[x_1]$, $d\in \bF[\vx]$ with $\deg_{x_1}(a_\theta)<\deg_{x_1}(d)$, and $d$
is irreducible in $\bF[\vx]$.

Let $\theta$ be an automorphism of $\bF(\vx)$, $c\in\bF^*$ and $a,b\in \bF(\vx)$. 
By expanding the expression~$c\cdot\theta(g) - g$, 
one can verify that the following reduction formula holds:
\begin{equation}\label{EQ:redformula}
	\frac{a}{\theta^\ell(b)} = c\cdot\theta(g) - g + \frac{c^{-\ell}\theta^{-\ell}(a)}{b},
\end{equation}
where $g = \sum_{i=0}^{\ell-1} \frac{c^{i-\ell}\theta^{i-\ell}(a)}{\theta^i(b)}$ if $\ell > 0$ and $g = -\sum_{i=0}^{-\ell-1} \frac{c^i\theta^{i}(a)}{\theta^{\ell+i}(b)}$ if~$\ell<0$.
For any $\theta\in G$, write $\theta=\theta_{x_1}^{a_1}\cdots\theta_{x_n}^{a_n}$ with $a_i\in\Z$. Let $\alpha_\theta=(a_1,\ldots,a_n)\in\Z^n$ be the index vector related to $\theta$.
Using this notation, we apply the above reduction formula~\eqref{EQ:redformula}  with $(c,\theta)=(c_1,\theta_{x_1}),\ldots,(c_n,\theta_{x_n})$ iteratively and then arrive at the following decomposition.

\begin{lem}\label{LEM:red2}
	Let $f\in V_{[d]_G,j}$ be given in the form~\eqref{EQ:red1} and $\mathbf{c}=(c_1,\ldots,c_n)\in (\bF^*)^n$. Then we can decompose $f$ into the form
	\begin{equation}\label{EQ:red2}
		f=\sum_{i=1}^n\Delta_{c_i\theta_{x_i}}(g_i)+r \text{ with } r=\frac{a}{d^j},
	\end{equation}
	where $g_i\in\bF(\vx)$ and $a=\sum_\theta \mathbf{c}^{-\alpha_\theta}\theta^{-1}(a_\theta)$ with $\deg_{x_1}(a)<\deg_{x_1}(d)$. Note that here we use $\mathbf{c}^{-\alpha_\theta}$ to denote~$c_1^{-a_1}\cdots c_n^{-a_n}$.
	In particular, $f$ is $(\tta_{x_1},\ldots,\tta_{x_n})$-summable if and only if $r$ is $(\tta_{x_1},\ldots,\tta_{x_n})$-summable.
\end{lem}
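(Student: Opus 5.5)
The plan is to treat each summand $a_\theta/\theta(d)^j$ of~\eqref{EQ:red1} on its own, and to move its denominator back to $d^j$ one variable at a time using the reduction formula~\eqref{EQ:redformula}. First, a normalization. By definition of $V_{[d]_G,j}$, each denominator is $b^j$ with $b=c\,\theta(d)$ for some $c\in\bF^*$ and $\theta\in G$. I absorb $c^{-j}$ into the numerator, so every term has the form $a_\theta/\theta(d^j)$, using $\theta(d)^j=\theta(d^j)$. Fix such a term and write $\theta=\theta_{x_1}^{a_1}\cdots\theta_{x_n}^{a_n}$, so that $\alpha_\theta=(a_1,\ldots,a_n)$.

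Because the $\theta_{x_i}$ commute, I peel off one factor at a time. Set $\theta'=\theta_{x_2}^{a_2}\cdots\theta_{x_n}^{a_n}$, so $\theta(d^j)=\theta_{x_1}^{a_1}(\theta'(d^j))$. Apply~\eqref{EQ:redformula} with automorphism $\theta_{x_1}$, constant $c_1$, exponent $\ell=a_1$, numerator $a_\theta$ and $b=\theta'(d^j)$; this is trivial when $a_1=0$. It gives
\[
\frac{a_\theta}{\theta(d^j)}=\Delta_{c_1\theta_{x_1}}(g)+\frac{c_1^{-a_1}\theta_{x_1}^{-a_1}(a_\theta)}{\theta'(d^j)},
\]
with $g\in\bF(\vx)$ given explicitly. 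I then repeat the step with $\theta_{x_2},\ldots,\theta_{x_n}$, each time applied to the remainder term. By induction on the number of remaining variables, the final remainder is
\[
\frac{\mathbf{c}^{-\alpha_\theta}\,\theta^{-1}(a_\theta)}{d^j}.
\]
Summing over $\theta$ and collecting the $\Delta_{c_i\theta_{x_i}}$-parts into $g_i$ (these operators are additive) gives~\eqref{EQ:red2} with $a=\sum_\theta\mathbf{c}^{-\alpha_\theta}\theta^{-1}(a_\theta)$.

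Next I would check the degree bound. Each $\theta^{-1}$ sends $x_1$ to $x_1+k$ or to $q^kx_1$ for some integer $k$, and acts on $\hx_1$ by shifts or $q$-shifts, so it maps $\bE$ to $\bE$. Hence $\theta^{-1}$ maps $\bE[x_1]$ to itself and does not raise $x_1$-degree. Therefore $\deg_{x_1}(a)\le\max_\theta\deg_{x_1}(a_\theta)<\deg_{x_1}(d)$.

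The final equivalence is immediate: $f-r$ is a sum of $\Delta_{\tta_{x_i}}$-images, so $f$ is summable if and only if $r$ is. The only real care point is the bookkeeping of commuting the factors of $\theta$ and tracking the constants $c_i^{-a_i}$. There is no genuine obstacle; the induction on the number of variables handles it.
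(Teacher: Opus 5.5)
Your proof is correct and matches the paper's argument, which obtains the lemma exactly by applying the reduction formula~\eqref{EQ:redformula} iteratively with $(c,\theta)=(c_1,\theta_{x_1}),\ldots,(c_n,\theta_{x_n})$ to each summand. Your explicit check that $\theta^{-1}$ preserves $\bE[x_1]$ and does not raise $x_1$-degree, and your derivation of the final equivalence, fill in details the paper leaves implicit.
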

Next, we shall discuss the summability problem of polynomials in $
\bE[x_1]$ and of simple fractions 
\begin{equation*}
	f=\frac{a}{d^j},
\end{equation*}
where $j\in\N^+$, $a\in \bE[x_1]$, $d\in \bF[\vx]$ with $\deg_{x_1}(a)<\deg_{x_1}(d)$ and $d$
is irreducible and special with respect to $\theta_{x_1}$. When the denominator $d$ is normal with respect to $\theta_{x_1}$, the problem is nontrivial and will be solved in Sections \ref{SEC:criteria} and~\ref{SEC:tranformation}.

When $\theta_{x_1}=\si_{x_1}$, every irreducible polynomial in $x_1$ over $\bE$ is normal with respect to $\theta_{x_1}$ and every $f\in\bE[x_1]$ is $c_1\si_{x_1}$-summable for any~$c_1\in\bF^*$.

When $\theta_{x_1}=\tau_{q,x_1}$, the only special and irreducible polynomial in $\bE[x_1]$ with respect to $\theta_{x_1}$ is of the form $c\cdot x_1$ for some $c\in \bE\setminus\{0\}$. So it is sufficient to consider $f\in\bE[x_1,x_1^{-1}]$. For $c_1\in\bF^*$, note that $c_1(qx_1)^j-x_1^j=(c_1q^{j}-1)x_1^j$.
If $c_1\neq q^{-\nu}$ for any integer $\nu\in\Z$, then $f$ is $c_1\tau_{q,x_1}$-summable. 
In this case, for each~$j \in \bZ$ we have 
\begin{equation}\label{EQ:q-sumpoly}
	x_1^j = c_1 \cdot \tau_{q,x_1} \Bigl(\frac{x_1^j}{c_1q^{j}-1}\Bigr) - \frac{x_1^j}{c_1q^{j}-1}.
\end{equation}
Now suppose $c_1=q^{-\nu}$ for some $\nu\in\Z$. For each $j\in\Z$, we define $W_j$ as the $\bE$-vector subspace of $\bF(\vx)$ generated by $x_1^j$, i.e., $W_j=\{g \cdot x_1^j \mid g\in \bE\}$. Then $\bE[x_1,x_1^{-1}]$ is the direct sum of $W_j$ where $j$ ranges through $\Z$. We can write
\begin{equation}\label{EQ:qsumspecial}
	f= c_1 \cdot \tau_{q,x_1}(g)-g+f_\nu
\end{equation}
for some $g\in\bE[x_1,x_1^{-1}]$ and $f_\nu\in W_\nu$. Then $f$ is $(c_1\tau_{q,x_1},\tta_{x_2},\ldots,\tta_{x_n})$-summable if and only if $f_\nu$ is $(c_1\tau_{q,x_1},\tta_{x_2},\ldots,\tta_{x_n})$-summable provided that~$c_1 = q^{-\nu}$.
The following result shows that the summability of $f_\nu$ can be further reduced to that in $n-1$ variables.

\begin{prop}\label{PROP:laurentpoly}
	Let $f$ be in the form of~\eqref{EQ:qsumspecial}. Then we have that $f$ is $(q^{-\nu}\tau_{q,x_1},\tta_{x_2},\ldots,\tta_{x_n})$-summable in $\bF(\vx)$ if and only if $f_\nu$ is $(\tta_{x_2},\ldots,\tta_{x_n})$-summable in $\bF(\vx)$.
\end{prop}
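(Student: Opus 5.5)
Write $f_\nu = h\,x_1^{\nu}$ with $h\in\bE=\bF(\hx_1)$. The key observation is that $q^{-\nu}\tau_{q,x_1}$ fixes $x_1^\nu$. Also, every $\tta_{x_i}$ with $i\ge2$ commutes with multiplication by $x_1^\nu$, since $\theta_{x_i}$ does not touch $x_1$.

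\emph{Sufficiency.} Suppose $f_\nu=\sum_{i\ge2}\Delta_{\tta_{x_i}}(g_i)$ with $g_i\in\bF(\vx)$. Then $f_\nu$ is trivially $(q^{-\nu}\tau_{q,x_1},\tta_{x_2},\ldots,\tta_{x_n})$-summable. By \eqref{EQ:qsumspecial}, $f-f_\nu$ is a $q^{-\nu}\tau_{q,x_1}$-difference, so $f$ is summable as well.

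\emph{Necessity.} By \eqref{EQ:qsumspecial}, $f_\nu$ is summable whenever $f$ is. So assume
\[
h\,x_1^\nu=\Delta_{q^{-\nu}\tau_{q,x_1}}(g_1)+\sum_{i\ge2}\Delta_{\tta_{x_i}}(g_i),\qquad g_i\in\bF(\vx).
\]
The plan is to apply a projection $\pi:\bF(\vx)\to W_\nu$ with three properties:
\begin{itemize}
\item[(a)] $\pi$ is the identity on $W_\nu$;
\item[(b)] $\pi$ kills the image of $\Delta_{q^{-\nu}\tau_{q,x_1}}$;
\item[(c)] $\pi$ commutes with each $\tta_{x_i}$ for $i\ge2$.
\end{itemize}
Applying $\pi$ then gives $f_\nu=\sum_{i\ge2}\Delta_{\tta_{x_i}}(\pi(g_i))$. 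Writing $\pi(g_i)=u_i x_1^\nu$ with $u_i\in\bE$, this is the desired summability in $\bF(\vx)$.

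\emph{Construction of $\pi$.} Take the orbital decomposition \eqref{EQ:adddecomp} with respect to $x_1$ and the group $G$. Define $\pi$ to be zero on every $V_{[d]_G,j}$ whose orbit is not that of $x_1$. On the remaining part, group the terms into the Laurent part $\bE[x_1,x_1^{-1}]$; the only special orbit, that of $x_1$, is absorbed here. On $\bE[x_1,x_1^{-1}]=\bigoplus_j W_j$, define $\pi$ as the projection onto $W_\nu$.

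\emph{Verification.}
\begin{itemize}
\item Property (c): each $V_{[d]_G,j}$ is invariant under $\bE[G]$ by Lemma~\ref{LEM:G-invariant}. Each $W_j$ is preserved by $\theta_{x_i}$ for $i\ge2$ (and by $\tau_{q,x_1}$), since these act only on coefficients in $\bE$ or scale $x_1^j$. Hence $\pi$ commutes with all of them.
\item Property (b): let $g=g_0+\sum g_{[d],j}$. On the components $V_{[d]_G,j}$, $\Delta_{q^{-\nu}\tau_{q,x_1}}$ stays in the same component, where $\pi$ is zero. On the Laurent part, $\Delta$ sends $u x_1^j$ to $(q^{j-\nu}-1)u x_1^j$, whose $W_\nu$-component vanishes.
\end{itemize}

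\emph{Expected obstacle.} The main care point is the bookkeeping of the decomposition. The orbit of $x_1$ must be treated as part of the Laurent polynomial ring $\bE[x_1,x_1^{-1}]$ rather than as a $V$-space, so that the partial fraction decomposition over $\bE$ is genuinely a direct sum of $\theta$-invariant pieces. In particular, denominators that are powers of $x_1$ must be placed in $\bigoplus W_j$ with $j<0$. Once this invariance is checked, the rest is formal.
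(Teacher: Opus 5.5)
Your proof is correct and follows essentially the same route as the paper. The paper first applies Lemma~\ref{LEM:red1} to assume the $g_i$ lie in $\bE[x_1,x_1^{-1}]$, then takes $W_\nu$-components using the $G$-invariance of each $W_j$, and finally notes that $\Delta_{q^{-\nu}\tau_{q,x_1}}$ annihilates $W_\nu$; your single projection $\pi$ just packages these two projection steps into one operator.
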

\begin{proof}
	The sufficiency follows from the definition of summability. 
	Conversely, let $\tta_{x_1}:=q^{-\nu}\tau_{q,x_1}$ and $G:=\la \tau_{q,x_1},\theta_{x_2},\ldots,\theta_{x_n} \ra$.
	Since $f_\nu \in W_\nu\subseteq\bE[x_1,x_1^{-1}]$ and $ \bE[x_1,x_1^{-1}]$ is the union of $\bE[x_1]$ and $V_{[x_1]_G,j}$ for all $j\in\N^+$, we can assume that $f_\nu=\sum_{i=1}^n \Delta_{\tta_{x_i}}(g_i)$ with $g_i \in \bE[x_1,x_1^{-1}]$ by Lemma~\ref{LEM:red1}. Furthermore, $\bE[x_1,x_1^{-1}]=\bigoplus_{j\in\Z} W_j$ and for each $j\in\Z$, the subspace $W_j$ is also $G$-invariant, i.e., any $h\in W_j$ and $\theta\in G$ implies $\theta(h)\in W_j$. Let $g_{i,\nu}$ be the component of $g_i$ in $W_\nu$. By the similar discussion as in Lemma~\ref{LEM:red1}, we have  
	\[f_\nu=\Delta_{\tta_{x_1}}(g_{1,\nu})+\Delta_{\tta_{x_2}}(g_{2,\nu})+\cdots+\Delta_{\tta_{x_n}}(g_{n,\nu}).\]
	Here $g_{1,\nu}$ is of the form $h\cdot x_1^\nu$ for some $h\in \bE$. Consequently, the first term vanishes: 
	\[\Delta_{\tta_{x_1}}(g_{1,\nu})=h q^{-\nu}\tau_{q,x_1}(x_1^\nu)-h x_1^\nu=0.\]
	Hence $f_\nu$ is $(\tta_{x_2},\ldots,\tta_{x_n})$-summable in $\bF(\vx)$. 
\end{proof}
In Proposition~\ref{PROP:laurentpoly}, if $f_\nu$ is $(\tta_{x_2},\ldots,\tta_{x_{n}})$-summable in $\bF(\vx)$, then $f_\nu$ can be regarded as $(\tta_{x_2},\ldots,\tta_{x_{n}})$-summable in $\bK(x_2,\ldots,x_{n})$ with $\bK=\bF(x_1)$, since $\bF^*\subseteq\bK^*$. The latter one is contained in the summability problem in $n-1$ variables (if we replace $\bF$ by $\bK$).
\section{Isotropy groups}\label{SEC:iso}
Consider the group action given in Section~\ref{SEC:adddecomp}.
Let $p\in\bF[\vx]$ be a nonconstant polynomial and $K$ be a subgroup of $G=\la\theta_{x_1},\ldots,\theta_{x_n}\ra$. The set
\begin{equation*}
	K_p:=\{\theta\in K\mid \theta(p)=c\cdot p \text{ for some } c\in \bF^*\}
\end{equation*}
is a subgroup of $K$, called the {\em isotropy group} of $p$ in $K$. If two polynomials $p,q$ are $K$-equivalent, then $K_p=K_q$. In this section, we shall discuss some algebraic properties of $G_p$, which will be used in Section~\ref{SEC:criteria}.

Let $G^\si=\la\si_{y_1},\ldots,\si_{y_k}\ra$ and $G^\tau=\la\tau_{z_1},\ldots,\tau_{z_m}\ra$ be subgroups of $G$. Then $G=G^{\si}\oplus G^\tau$. The isotropy groups of $p$ in $G^{\si}$ and $G^{\tau}$ are denoted by $G^\si_p$ and $G^{\tau}_p$, respectively. Every subgroup of $G$ is a free abelian group. Furthermore, the following structure property of the quotient group $G^\si/G^\si_p$ is given by Sato~\cite[Lemma~A-3]{Sato1990}.

\begin{lem}\label{LEM:G1/G1pfree}
	$G^\si/G^\si_p$ is a free abelian group.
\end{lem}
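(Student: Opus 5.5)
Since $G^\si$ is a finitely generated free abelian group, the quotient $G^\si/G^\si_p$ is finitely generated. A finitely generated abelian group is free exactly when it is torsion-free. So the plan is to show that $G^\si/G^\si_p$ has no torsion. Identify $G^\si$ with $\Z^k$ via $\si_{\vy}^{\alpha}\leftrightarrow\alpha$. Then torsion-freeness means: if $\alpha\in\Z^k$ and $N\alpha$ lies in the isotropy lattice for some positive integer $N$, then $\alpha$ itself lies in it.

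First I would pin down the shape of the isotropy condition for pure shifts. Take $\si=\si_{\vy}^{\alpha}$. Then $\si(p)$ has the same total degree and the same homogeneous top-degree component as $p$, because a shift only adds lower-order terms. So $\si(p)=c\,p$ with $c\in\bF^*$ forces $c=1$. Hence
\[G^\si_p=\{\si_{\vy}^{\alpha}\mid p(\vy+\alpha,\vz)=p(\vy,\vz)\}.\]

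The key step is to show that the set $L$ of $\beta\in\bF^k$ with $p(\vy+\beta,\vz)=p(\vy,\vz)$ is an $\bF$-vector subspace of $\bF^k$.
\begin{itemize}
\item $L$ is closed under addition and negation, so it is an additive subgroup.
\item If $\beta\in L$, then $p(\vy+m\beta,\vz)=p$ for every $m\in\Z$.
\item The polynomial $t\mapsto p(\vy+t\beta,\vz)-p(\vy,\vz)$ lies in $\bF[\vy,\vz][t]$ and vanishes at infinitely many integers $t$. Since $\operatorname{char}\bF=0$, these integers are distinct elements of $\bF$, so the polynomial is identically zero in $t$.
\item Therefore $\lambda\beta\in L$ for every $\lambda\in\bF$.
\end{itemize}
This is the main point of the argument, and characteristic zero is essential here.

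Then $G^\si_p$ corresponds to $L\cap\Z^k$. If $N\alpha\in L$ with $N\neq0$, then $\alpha=N^{-1}(N\alpha)\in L$ because $L$ is an $\bF$-subspace, and $\alpha\in\Z^k$. So $\alpha\in G^\si_p$, and the quotient is torsion-free, hence free.

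I expect no real obstacle beyond justifying the linear-subspace property cleanly. The point that needs care is that $p$ also involves the $\vz$ variables and the scalar $c$. Both are handled by the leading-form argument above: the top-degree form of $p$ in all variables $\vx$ is unchanged by a shift.
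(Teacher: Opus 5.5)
Your proof is correct. The paper does not prove this lemma itself; it simply imports it from Sato's Lemma~A-3. Your proposal is therefore a self-contained replacement, and it follows the same template the paper uses for the $q$-analogue, Lemma~\ref{LEM:G2/G2pfree}: reduce freeness to torsion-freeness via the structure theorem for finitely generated abelian groups, then show that if $N\alpha$ lies in the isotropy group, so does $\alpha$.

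The shift-specific ingredients differ from the $q$-case. There, the paper shows the scalar must be a power $q^{\ell_0}$ and compares exponents monomial by monomial. You instead show the scalar must be $1$, because the top homogeneous component is shift-invariant. You then use the characteristic-zero fact that $t\mapsto p(\vy+t\beta,\vz)-p(\vy,\vz)$, vanishing at infinitely many integers, vanishes identically.

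Relying on the citation is shorter. Your route is elementary and treats $G^\sigma$ and $G^\tau$ symmetrically. It also yields slightly more than the lemma: $G^\sigma_p$ corresponds to the integer points $L\cap\Z^k$ of an $\bF$-linear subspace $L\subseteq\bF^k$. This matches the linear-equation description of a basis of $G^\sigma_p$ mentioned in the paper's remark on computing $G_p$. Only the weaker fact is needed for torsion-freeness: the polynomial in $t$ vanishes on $N\Z$, hence at $t=1$.
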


Similarly, we have the following lemma in $q$-shift case.
\begin{lem}\label{LEM:G2/G2pfree} 
	$G^\tau/G^\tau_p$ is a free abelian group.
\end{lem}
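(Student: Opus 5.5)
Since $G^\tau\cong\Z^m$ is finitely generated, the quotient $G^\tau/G^\tau_p$ is a finitely generated abelian group. It is therefore free as soon as it is torsion-free. So the plan is to show: if $\theta\in G^\tau$ and $\theta^\ell\in G^\tau_p$ for some nonzero $\ell\in\Z$, then $\theta\in G^\tau_p$.

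The first step is to make the action of $G^\tau$ on polynomials explicit. Write $\theta=\tau_{q,z_1}^{b_1}\cdots\tau_{q,z_m}^{b_m}$ with $\beta=(b_1,\ldots,b_m)\in\Z^m$. Expand $p=\sum_{\gamma}p_\gamma\,\vz^\gamma$, where $\gamma\in\N^m$ runs over the exponent vectors in $\vz$ and the coefficients $p_\gamma$ lie in $\bF[\vy]$, with $\vy$ untouched by $G^\tau$. Then
\[\theta(p)=\sum_\gamma q^{\langle\beta,\gamma\rangle}p_\gamma\,\vz^\gamma,\]
where $\langle\beta,\gamma\rangle=\sum_i b_i g_i$. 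Each monomial is an eigenvector, so $\theta(p)=c\,p$ for some $c\in\bF^*$ holds iff $q^{\langle\beta,\gamma\rangle}$ takes the same value $c$ for every $\gamma$ in the support $S=\{\gamma: p_\gamma\neq0\}$. Because $q$ is not a root of unity, this is equivalent to $\langle\beta,\gamma-\gamma'\rangle=0$ for all $\gamma,\gamma'\in S$.

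This gives a linear description of the isotropy group: identifying $G^\tau$ with $\Z^m$,
\[G^\tau_p=\{\beta\in\Z^m:\langle\beta,\gamma-\gamma'\rangle=0\ \text{for all }\gamma,\gamma'\in S\}.\]
Torsion-freeness is now immediate. If $\ell\beta$ lies in this set with $\ell\neq0$, then $\ell\langle\beta,\gamma-\gamma'\rangle=0$ in $\Z$, so $\langle\beta,\gamma-\gamma'\rangle=0$ and $\beta\in G^\tau_p$. Equivalently, $G^\tau_p=\Lambda^\perp\cap\Z^m$ with $\Lambda$ the span of the differences, and such a subgroup is saturated, so the quotient embeds in a torsion-free group.

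The step needing the most care is the passage from $\theta(p)=cp$ to equality of the exponents. It relies on two facts: the monomials in $\vz$ are independent over $\bF[\vy]$, and $q^a=q^b$ forces $a=b$. The latter is exactly the standing assumption $q^m\neq1$ for all nonzero $m\in\Z$.
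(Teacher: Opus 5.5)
Your proof is correct and follows essentially the same route as the paper: you reduce freeness to torsion-freeness of the finitely generated quotient, then use that the $\vz$-monomials are eigenvectors of $G^\tau$ and that $q$ is not a root of unity to compare exponents. The only difference is cosmetic: you phrase the isotropy condition as $\langle\beta,\gamma-\gamma'\rangle=0$, whereas the paper writes $c=q^{\ell_0}$, shows $\ell\mid\ell_0$, and takes $\tilde c=q^{\ell_0/\ell}$.
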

\begin{proof}
	By~\cite[Chapter~\uppercase\expandafter{\romannumeral3}, Theorem~7.3]{BookLang}, it suffices to show $G^\tau/G^\tau_p$ is torsion free. Suppose $\tau_0^{\ell}\in G^\tau_p$ for some $\ell>0$. Write $\tau_0=\tau_{q,z_1}^{t_1}\cdots\tau_{q,z_m}^{t_m}$ and $p=\sum a_I \vz^I$ with $I=(i_1,\ldots,i_m)\in\Z^m$, $\vz^I=z_1^{i_1}\cdots z_m ^{i_m}$ and  $a_I\in\bF[\vy]$. Let $T$ be the set of all monomials $\vz^I$ appearing in $p$ with nonzero coefficients in $\bF[\vy]$. Then $\tau_0^\ell(p)=c\cdot p$ implies $c=q^{\ell_0}$ for some $\ell_0\in \Z$ and we have
	\[\sum_Ia_I q^{\ell( t_1i_1+\cdots+t_mi_m)-\ell_0}\vz^I=\sum_I a_I \vz^I.\]
	It follows that for any $\vz^I\in T$, \[\ell(t_1i_1+\cdots+t_mi_m)=\ell_0,\]
	since $q$ is not a root of unity. So $\ell$ divides $\ell_0$ and let $\tilde c=q^{\ell_0/\ell}$. Then $\tau_0(p)=\tilde c\cdot p$, i.e., $\tau_0\in G^\tau_p$.
\end{proof}
\begin{prop}\label{PROP:G/Gpfree}
	$G_p=G_p^\si\oplus G_p^\tau $. Therefore, $G/G_p\cong G^\si/G^\si_p \oplus G^\tau/G^\tau_p$ is a free abelian group.
\end{prop}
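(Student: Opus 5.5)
The inclusion $G_p^\si\oplus G_p^\tau\subseteq G_p$ is immediate, because $G_p$ is a subgroup of $G$ containing both $G_p^\si$ and $G_p^\tau$. The substance is the reverse inclusion. Take $\theta=\si\tau\in G_p$ with $\si\in G^\si$ and $\tau\in G^\tau$, so that $\si\tau(p)=c\cdot p$ for some $c\in\bF^*$. I will show separately that $\tau\in G^\tau_p$ and $\si\in G^\si_p$.

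Write $\si=\si_{y_1}^{s_1}\cdots\si_{y_k}^{s_k}$ and $\tau=\tau_{q,z_1}^{t_1}\cdots\tau_{q,z_m}^{t_m}$. As in the proof of Lemma~\ref{LEM:G2/G2pfree}, expand $p=\sum_I a_I\vz^I$ with $a_I\in\bF[\vy]$ and let $T$ be the set of exponents $I$ with $a_I\neq0$. The operator $\si$ acts only on the $\vy$-variables and $\tau$ acts only on the $\vz$-variables, so
\[\si\tau(p)=\sum_I q^{t\cdot I}\,\si(a_I)\,\vz^I,\]
where $t\cdot I=t_1i_1+\cdots+t_mi_m$. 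Comparing coefficients of $\vz^I$ in $\si\tau(p)=c\,p$ gives
\[q^{t\cdot I}\si(a_I)=c\,a_I \quad\text{for every } I\in T.\]

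The key step is to extract $\si(a_I)=a_I$ from this equation, and the plan is to compare leading coefficients. The polynomial $\si(a_I)$ has the same total degree and the same top-degree homogeneous part as $a_I$, because a shift only adds lower-order terms. Comparing top-degree parts therefore forces $q^{t\cdot I}=c$ for every $I\in T$. Two conclusions follow:
\begin{itemize}
\item Since $c=q^{t\cdot I}$ is the same for all $I\in T$, we get $\tau(p)=c\,p$, so $\tau\in G^\tau_p$.
\item Substituting back gives $\si(a_I)=a_I$ for all $I\in T$, hence $\si(p)=p$ and $\si\in G^\si_p$.
\end{itemize}

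This proves $G_p=G_p^\si\oplus G_p^\tau$. The directness of the sum holds because $G=G^\si\oplus G^\tau$. For the second statement, the natural map
\[G^\si\oplus G^\tau\to G^\si/G^\si_p\oplus G^\tau/G^\tau_p\]
is surjective with kernel $G_p^\si\oplus G_p^\tau=G_p$, which yields the stated isomorphism. The right-hand side is a direct sum of two free abelian groups by Lemmas~\ref{LEM:G1/G1pfree} and~\ref{LEM:G2/G2pfree}, so it is free abelian of finite rank.

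The only step that needs care is the top-degree comparison. It works because $\bF$ has characteristic zero and a shift preserves the leading form of a polynomial.
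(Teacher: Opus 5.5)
Your proof is correct and follows the paper's argument. Like the paper, you expand $p$ in $\vz$ with coefficients in $\bF[\vy]$, compare the coefficients of each $\vz^I$, and use the fact that a shift preserves the top-degree part to get $q^{t\cdot I}=c$ and $\si(a_I)=a_I$ for all $I\in T$; you simply spell out the leading-form comparison that the paper only asserts, and your closing appeal to characteristic zero is unnecessary, since that comparison works over any field.
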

\begin{proof}
	Since $G=G^\si \oplus G^\tau$, we only need to show that if $\theta=\si_0\cdot \tau_0\in G_p$ with $\si_0\in G^\si, \tau_0\in G^\tau$, then $\si_0\in G_p^\si, \tau_0\in G^\tau_p$. Write $p=\sum f_I(\vy)\vz^I$ with $f_I(\vy)\in\bF[\vy]$ and $T=\{\vz^I \mid f_I(\vy)\neq 0\}$. If $\theta (p)=c\cdot p$ for some $c\in \bF^*$, then
	\[\sum_I \si_0(f_I(\vy))\tau_0(\vz^I)=\sum_I c \cdot f_I(\vy)\vz^I\]
	Since $\si_0$ preserves the leading coefficient and $\tau_0$ preserves the term structure, we have $\si_0(f_I(\vy))=f_I(\vy)$ and $\tau_0(\vz^I)=c\cdot\vz^I$ for any $\vz^I\in T$. Hence $\si_0(p)=p$ and $\tau_0(p)=c\cdot p$.
\end{proof}
If $n>1$, let $H=\la\theta_{x_1},\ldots,\theta_{x_{n-1}}\ra$ be a subgroup of $G$ and $H_p$ be the isotropy group of $p$ in $H$. By Proposition~\ref{PROP:G/Gpfree}, both $G/G_p$ and $H/H_p$ are free abelian groups. If $p$ is normal with respect to $\theta_{x_1}$ and of positive degree in $x_1$, the ranks of $G_p$ and $H_p$ are strictly less than that of $G$ and $H$ respectively.
\begin{rem}
	Computing a basis of $G_p$ can be reduced to solving systems of linear Diophantine equations. By the direct sum decomposition of $G_p$, we can compute bases for $G_p^{\si}$ and $G_p^{\tau}$ separately. A defining set of linear equations for the basis of $G_p^\sigma$ can be derived by utilizing methods from shift equivalence testing, see~\cite{Dvir2014} and~\cite[Section~3]{ChenDuFang2025}.
	The basis of $G_p^\tau$ can be obtained via $q$-shift equivalence testing~\cite[Theorem~1]{Wang2021}.
\end{rem}
The following lemma is a direct extension of Lemma~4.4 of~\cite{ChenDuFang2025} which can be proved in the same way.  
\begin{lem}\label{LEM:Gd/Hdfree}
	$G_p/H_p$ is a free abelian group with $\rank G_p/H_p\leq1$.
\end{lem}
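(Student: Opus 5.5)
The plan is to embed $G_p/H_p$ into a subgroup of $\Z$ via the exponent of $\theta_{x_n}$. First note that $H_p = G_p\cap H$, since both are defined by the same condition $\theta(p)=c\cdot p$ with $c\in\bF^*$ and $H\le G$. Hence the natural map $G_p/H_p\to G/H$ is injective, as its kernel is $(G_p\cap H)/H_p$, which is trivial.

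Next identify $G/H$. Since $G=\langle\theta_{x_1},\ldots,\theta_{x_n}\rangle$ is free abelian on these generators and $H$ is generated by the first $n-1$ of them, the map sending $\theta_{\vx}^{\alpha}$ to its last exponent $a_n$ induces an isomorphism $G/H\cong\Z$. So $G_p/H_p$ is isomorphic to the image of $G_p$ under $\theta_{\vx}^{\alpha}\mapsto a_n$, which is a subgroup of $\Z$.

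Every subgroup of $\Z$ is either $\{0\}$ or $e\Z$ for some $e>0$. Either way it is free abelian of rank at most~$1$, which proves the lemma.

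This argument does not need Proposition~\ref{PROP:G/Gpfree} or the specific form of $\theta_{x_n}$ (shift or $q$-shift), so it works in the mixed case exactly as in Lemma~4.4 of~\cite{ChenDuFang2025}. The only point needing any care is the identity $H_p=G_p\cap H$, and it follows immediately from the definitions. There is no real obstacle.
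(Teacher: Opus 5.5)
Your proof is correct and follows essentially the same route as the paper. The paper gives no proof here and defers to Lemma~4.4 of Chen--Du--Fang, but its later use in the proof of Theorem~\ref{THM:rank=1} (writing the generator of $G_d/H_d$ with $t_n$ the smallest positive $\theta_{x_n}$-exponent occurring in $G_d$) shows the intended argument is yours: project $G_p$ onto its $\theta_{x_n}$-exponent, observe that the kernel is $H_p=G_p\cap H$, and conclude that $G_p/H_p$ is a subgroup of $\Z$.
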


\section{Summability criteria}\label{SEC:criteria}
Combining Lemma~\ref{LEM:red1} and Lemma~\ref{LEM:red2}, we can reduce Problem~\ref{PROB:summabilityproblem} to that for simple fractions
\begin{equation}\label{EQ:simplefraction}
	f = \frac{a}{d^j},
\end{equation}
where $j\in\N^+$, $a\in \bE[x_1]$, $d\in \bF[\vx]$ with $\deg_{x_1}(a)<\deg_{x_1}(d)$ and $d$
is irreducible and normal with respect to $\theta_{x_1}$.
In this section, we shall present a summability criterion for such simple fractions.

For the univariate case with $n=1$, the problem is for $f=a/d^j$ with $a,d\in \bF[x_1]$ in the form~\eqref{EQ:simplefraction} to decide whether $f$ is $\tta_{x_1}$-summable. Since $d$ is normal with respect to $\theta_{x_1}$, we get the following criterion of $\tta_{x_1}$-summability from~\cite[Lemma~6.3]{Hardouin2008}; see also~\cite{Abramov1975,Paule1995b,ChenSinger2012}.
\begin{lem}\label{LEM:unicase}
	Let $f=a/d^j$ be of the form~\eqref{EQ:simplefraction} with $a,d\in\bF[x_1]$ and $\tta_{x_1}=c\cdot \theta_{x_1}$ for some $c\in\bF^*$. Then $f$ is $\tta_{x_1}$-summable if and only if $a=0$.
\end{lem}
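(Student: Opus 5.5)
The sufficiency direction is immediate: if $a=0$ then $f=0=\Delta_{\tta_{x_1}}(0)$. For the necessity, I will suppose $f=a/d^j=c\,\theta_{x_1}(g)-g$ for some $g\in\bF(x_1)$ with $a\neq 0$, and derive a contradiction by a dispersion argument on the irreducible factors of the denominator of $g$ lying in the $\theta_{x_1}$-orbit of $d$.

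First, I apply the orbital decomposition \eqref{EQ:adddecomp} in the univariate setting, where $\bE=\bF$ and $G=\la\theta_{x_1}\ra$. By Lemma~\ref{LEM:red1}, I may replace $g$ by its component $g_{[d]_G,j}$. So I can assume
\[g=\sum_{\ell=\ell_0}^{\ell_1}\frac{b_\ell}{\theta_{x_1}^{\ell}(d)^j},\]
with $\deg b_\ell<\deg d$. This is a finite sum, and I may take $b_{\ell_0}\neq0\neq b_{\ell_1}$ if $g\neq0$. If $g=0$, then $a=0$ and we are done.

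The point where normality of $d$ enters is this: the polynomials $\theta_{x_1}^\ell(d)$ for $\ell\in\Z$ are pairwise coprime. Indeed, $\gcd(\theta^\ell(d),\theta^{\ell'}(d))=\theta^{\ell'}\gcd(\theta^{\ell-\ell'}(d),d)=1$. They are also pairwise non-associate, so the terms with distinct $\ell$ are distinct summands in the partial fraction decomposition. Now I compute
\[c\,\theta_{x_1}(g)-g=\sum_{\ell=\ell_0+1}^{\ell_1+1}\frac{c\,\theta_{x_1}(b_{\ell-1})}{\theta_{x_1}^{\ell}(d)^j}-\sum_{\ell=\ell_0}^{\ell_1}\frac{b_\ell}{\theta_{x_1}^\ell(d)^j}.\]
Each numerator still has degree less than $\deg d$, since $\theta_{x_1}$ preserves degree. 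By uniqueness of the partial fraction decomposition, the coefficient at $\theta^{\ell_1+1}(d)^j$ is $c\,\theta(b_{\ell_1})\neq0$. This term must match the corresponding term of $a/d^j$, which forces $\ell_1+1=0$. Likewise, the coefficient at $\theta^{\ell_0}(d)^j$ is $-b_{\ell_0}\neq0$, which forces $\ell_0=0$. Hence $\ell_1=-1<0=\ell_0$, a contradiction. Therefore $g=0$ and $a=0$.

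The main obstacle is making sure that normality is exactly what guarantees that the shifted copies $\theta^\ell(d)$ are distinct and non-associate for $\ell\neq0$. In the $q$-shift case this requires excluding $d=c\,x_1$, and that case is already excluded because $d$ is assumed normal. The rest of the argument is bookkeeping via Lemma~\ref{LEM:red1}, or via a direct appeal to uniqueness of partial fractions.
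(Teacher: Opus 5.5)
Your proof is correct. It takes a different route from the paper, which gives no argument of its own: it imports the lemma from Hardouin--Singer's Lemma~6.3, with pointers to Abramov, Paule and Chen--Singer, where criteria of this kind usually come from residue- or dispersion-type analysis of the partial fraction decomposition along $\theta_{x_1}$-orbits.

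You stay inside the paper's own framework instead. The orbital decomposition and the proof of Lemma~\ref{LEM:red1} confine $g$ to $V_{[d]_G,j}$. Normality makes the $\theta_{x_1}^{\ell}(d)$ pairwise non-associate, so terms with different $\ell$ are separate summands in the partial fraction decomposition. Comparing the extreme terms at $\theta_{x_1}^{\ell_1+1}(d)^j$ and $\theta_{x_1}^{\ell_0}(d)^j$ then forces $\ell_0=0=\ell_1+1$, which is impossible.

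Your approach has three advantages:
\begin{itemize}
\item It is short and self-contained, and works directly with an irreducible, not necessarily linear, $d$ over the given field.
\item It is uniform in the twist $c\in\bF^*$, since the extremal coefficients $c\,\theta_{x_1}(b_{\ell_1})$ and $-b_{\ell_0}$ are nonzero whatever $c$ is.
\item It pinpoints exactly where normality enters. This is what breaks for the special polynomial $d=x_1$ under $\tau_{q,x_1}$, where \eqref{EQ:q-sumpoly} shows that nonzero numerators can be summable.
\end{itemize}

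Residue-style criteria are more general, since they handle arbitrary $f\in\bF(x_1)$ with several terms in one orbit. For a single simple fraction, however, your extremal argument is all that is needed. One small point: restricting $g$ to $V_{[d]_G,j}$ uses the proof of Lemma~\ref{LEM:red1} rather than its statement. The paper does the same in the proof of Lemma~\ref{LEM:rank=0}, so this is not a gap.
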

For the multivariate rational functions with $n>1$, we proceed in the following two cases.
\begin{enumerate}
	\item $\rank G_d/H_d=0,$
	\item $\rank G_d/H_d=1.$
\end{enumerate}

Firstly if $\rank G_d/H_d=0$, the summability problem in $n$ variables can be reduced to that in $n-1$ variables by the following lemma which is a direct extension of Lemma~5.6 in~\cite{ChenDuFang2025}.
Note that the $(\tta_{x_1},\ldots,\tta_{x_{n-1}})$-summability in $\bF(\vx)$ implies the $(\tta_{x_1},\ldots,\tta_{x_{n-1}})$-summability in $\bK(x_1,\ldots,x_{n-1})$ with $\bK=\bF(x_n)$. Furthermore, for this reason, we have the corresponding version of Lemmas \ref{LEM:G-invariant}, ~\ref{LEM:red1} and~\ref{LEM:red2} for $(\tta_{x_1},\ldots,\tta_{x_{n-1}})$-summability in $\bF(\vx)$, which will be used in the proof of Lemma~\ref{LEM:rank=0} and Theorem~\ref{THM:rank=1}.
\begin{lem}\label{LEM:rank=0}
	Let $f=a/d^j\in \F(\vx)$ be given in the form~\eqref{EQ:simplefraction}. If~$
	\rank G_d/H_d=0$ and~$n>1$, then $f$ is $(\tta_{x_1},\ldots,\tta_{x_n})$-summable in $\bF(\vx)$ if and only if $f$ is $(\tta_{x_1},\ldots,\tta_{x_{n-1}})$-summable in $\bF(\vx)$.
\end{lem}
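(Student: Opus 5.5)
Sufficiency is immediate: if $f=\sum_{i=1}^{n-1}\Delta_{\tta_{x_i}}(g_i)$, take $g_n=0$. For necessity, start from a representation $f=\sum_{i=1}^{n}\Delta_{\tta_{x_i}}(g_i)$ with $g_i\in\bF(\vx)$. The plan is to eliminate $g_n$ by projecting onto suitable $H$-invariant components, where $H=\la\theta_{x_1},\ldots,\theta_{x_{n-1}}\ra$.

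First, I would use the orbital decomposition of $\bF(\vx)$ with respect to $x_1$ and the subgroup $H$ instead of $G$. The $H$-version of Lemma~\ref{LEM:G-invariant} says that each $V_{[b]_H,j}$ is invariant under $\bE[H]$, so in particular under $\Delta_{\tta_{x_1}},\ldots,\Delta_{\tta_{x_{n-1}}}$. Each $G$-orbit $[d]_G$ splits into $H$-orbits $[\theta_{x_n}^\ell(d)]_H$ for $\ell\in\Z$. The hypothesis $\rank G_d/H_d=0$ makes these $H$-orbits pairwise distinct. Indeed, if $\theta_{x_n}^\ell(d)\sim_H\theta_{x_n}^{\ell'}(d)$ with $\ell\neq\ell'$, then $\theta_{x_n}^{\ell-\ell'}\eta\in G_d$ for some $\eta\in H$. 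This element is not in $H_d$, so its image in $G_d/H_d$ is a nonzero element. Since $G_d/H_d$ is free by Lemma~\ref{LEM:Gd/Hdfree}, that image has infinite order, which contradicts rank $0$.

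Next, write $g_i^{(\ell)}$ for the component of $g_i$ in $V_{[\theta_{x_n}^\ell(d)]_H,j}$. Since $f$ lies in the $\ell=0$ component, projecting the identity onto the $H$-orbit components gives, for each $\ell$,
\[\delta_{\ell,0}f=\sum_{i=1}^{n-1}\Delta_{\tta_{x_i}}\bigl(g_i^{(\ell)}\bigr)+c_n\,\theta_{x_n}\bigl(g_n^{(\ell-1)}\bigr)-g_n^{(\ell)}.\]
Here I use that $\theta_{x_n}$ maps $V_{[\theta_{x_n}^{\ell-1}(d)]_H,j}$ onto $V_{[\theta_{x_n}^{\ell}(d)]_H,j}$.

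Then I telescope along $\ell$. Only finitely many $g_n^{(\ell)}$ are nonzero. Apply $c_n^{-\ell}\theta_{x_n}^{-\ell}$ to the $\ell$-th equation and sum over all $\ell$. The $g_n$-terms cancel telescopically, because $c_n^{-\ell}\theta_{x_n}^{-\ell}(c_n\theta_{x_n}g_n^{(\ell-1)})=c_n^{-(\ell-1)}\theta_{x_n}^{-(\ell-1)}g_n^{(\ell-1)}$. Since $\theta_{x_n}$ commutes with the $\Delta_{\tta_{x_i}}$ for $i<n$, the result is
\[f=\sum_{i=1}^{n-1}\Delta_{\tta_{x_i}}\Bigl(\sum_\ell c_n^{-\ell}\theta_{x_n}^{-\ell}g_i^{(\ell)}\Bigr),\]
which is the desired $(\tta_{x_1},\ldots,\tta_{x_{n-1}})$-summability.

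The main obstacle is the bookkeeping in the decomposition step. The projections have to be well defined: the $H$-orbital decomposition is a direct sum, and the $\Delta$'s and $\theta_{x_n}$ must respect it. The distinctness of the $H$-orbits, which is exactly where the rank-$0$ hypothesis enters, also has to be verified carefully. Throughout, $\theta_{x_n}$ acts on $\bE=\bF(\hx_1)$ coefficients, so the subspaces must be treated as $\F$-spaces. For this reason I would work with the $\F$-linear projections rather than $\bE$-linear ones.
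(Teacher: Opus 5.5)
Your proof is correct, but it follows a different route from the paper's. The paper argues by contradiction. It first uses the reduction formula~\eqref{EQ:redformula} to rewrite $g_n$, modulo $(\tilde\theta_{x_1},\ldots,\tilde\theta_{x_{n-1}})$-summable terms, as $\sum_{\ell=0}^{\rho}\lambda_\ell/\theta_{x_n}^{\ell}(\mu)^j$, with $\lambda_0\lambda_\rho\neq 0$ and every nonzero summand non-summable. It then applies Lemma~\ref{LEM:red1} (for $H$) only to the two extreme terms $\ell=0$ and $\ell=\rho+1$ of the resulting identity. This forces $\mu\sim_H d\sim_H\theta_{x_n}^{\rho+1}(\mu)$, which contradicts $\rank G_d/H_d=0$.

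You instead project the whole identity onto every component $V_{[\theta_{x_n}^{\ell}(d)]_H,j}$, $\ell\in\mathbb{Z}$, and telescope. This buys several things:
\begin{itemize}
\item It is direct and constructive: any $n$-operator certificate becomes the explicit $(n-1)$-operator certificate $\sum_\ell c_n^{-\ell}\theta_{x_n}^{-\ell}\bigl(g_i^{(\ell)}\bigr)$.
\item It needs neither the reduction formula, nor the preliminary reduction to $g_i\in V_{[d]_G,j}$, nor the normalization to non-summable summands.
\item It shows exactly where the rank hypothesis enters. The orbits $[\theta_{x_n}^{\ell}(d)]_H$ are indexed injectively by $\ell\in\mathbb{Z}$, not periodically as when $\rank G_d/H_d=1$, so the telescoping leaves no boundary term. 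In the periodic case the same bookkeeping essentially gives the cyclic system of congruences in the proof of Theorem~\ref{THM:rank=1}.
\end{itemize}

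The cost, compared with the paper's black-box use of Lemma~\ref{LEM:red1}, is one fact you should state explicitly. The projections $\pi_\ell$ onto $V_{[\theta_{x_n}^{\ell}(d)]_H,j}$ must satisfy $\pi_\ell\circ\theta_{x_n}=\theta_{x_n}\circ\pi_{\ell-1}$. It is not enough that $\theta_{x_n}$ maps $V_{[\theta_{x_n}^{\ell-1}(d)]_H,j}$ onto $V_{[\theta_{x_n}^{\ell}(d)]_H,j}$. You also need that $\theta_{x_n}$ maps no other summand of the $H$-orbital decomposition, including $\mathbb{E}[x_1]$, into $V_{[\theta_{x_n}^{\ell}(d)]_H,j}$. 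This holds because $\theta_{x_n}$ commutes with $H$, preserves $\deg_{x_1}$ and the exponent $j$, and is invertible, so it permutes the summands.
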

\begin{proof}
	The sufficiency follows from the definition of summability. For the necessity, suppose $f$ is $(\tta_{x_1},\ldots,\tta_{x_n})$-summable in $\bF(\vx)$ but not $(\tta_{x_1},\ldots,\tta_{x_{n-1}})$-summable in $\bF(\vx)$. By Lemma~\ref{LEM:red1}, we get
	$f=\Delta_{\tta_{x_1}}(g_1)+\cdots+\Delta_{\tta_{x_n}}(g_n)$
	with $g_1,\ldots,g_n$ in the same subspace $V_{[d]_G,j}$ as $f$. For each $i$ with $1\leq i\leq n$, write $\tta_{x_i}=c_i\cdot\theta_{x_i}$ for some $c_i\in\bF^*$. Applying the reduction formula~\eqref{EQ:redformula}  with $(c,\theta)=(c_1,\theta_{x_1}),\ldots,(c_{n-1},\theta_{x_{n-1}})$ iteratively, we can decompose $g_n$ as
	\begin{equation*}
		g_n=\sum_{i=1}^{n-1}\Delta_{\tta_{x_i}}(u_i)+\sum_{\ell=0}^{\rho}\frac{\lam_\ell}{\theta_{x_n}^{\ell}(\mu)^j},
	\end{equation*}
	where $u_i\in \F(\vx)$, $\rho\in\N$, $\lam_\ell\in\F(\hx_1)[x_1]$, $\mu\in \F[\vx]$ with $\deg_{x_1}(\lam_{\ell})<\deg_{x_1}(d)$ and $\mu$ is in the same $G$-orbit as $d$. Furthermore, we can assume $\lam_{0}\lam_{\rho}\neq0$ and each nonzero $\lam_\ell/\theta_{x_n}^{\ell}(\mu)^j$ is not $(\tta_{x_1},\ldots,\tta_{x_{n-1}})$-summable in $\bF(\vx)$. Substituting $g_n$ into the original equation for $f$ and using the commutativity between $\theta_{x_n}$ and $\theta_{x_i}$($1\leq i\leq n-1$), we obtain that
	\[ f-\Delta_{\tta_{x_n}}\biggl(\sum_{\ell=0}^{\rho}\frac{\lam_\ell}{\theta_{x_n}^{\ell}(\mu)^j}\biggr)=\sum_{i=1}^{n-1}\Delta_{\tta_{x_i}}(h_i),\]
	where  $h_i=g_i+\Delta_{\tta_{x_n}}(u_i)$ for all $1\leq i\leq n-1$.
	Then
	\begin{equation}\label{EQ:rk=0_red}
		f+\sum_{\ell=0}^{\rho+1}\frac{\tilde{\lam}_\ell}{\theta_{x_n}^{\ell}(\mu)^j}=\sum_{i=1}^{n-1}\Delta_{\tta_{x_i}}(h_i),
	\end{equation}
	where $\tilde{\lam}_{0}=\lam_0$, $\tilde{\lam}_{\rho+1}=-c_n\theta_{x_n}(\lam_\rho)$, $\tilde{\lam}_\ell=\lam_\ell-c_n\theta_{x_n}(\lam_{\ell-1})$ for all $1\leq\ell\leq\rho$.
	Since $d$ and $\mu$ are in the same $G$-orbit, it follows that $G_d=G_\mu$ and $H_d=H_\mu$. Since~$\rank G_d/H_d=0$, we have all $\theta_{x_n}^\ell(\mu)$ with $\ell\in \Z$ are in distinct $H$-orbits. In particular, the $H$-orbits $[\mu]_H,[\theta_{x_n}(\mu)]_H,\ldots,[\theta_{x_n}^{\rho+1}(\mu)]_H$ are distinct. On the other hand, the left hand side of~\eqref{EQ:rk=0_red} is $(\tta_{x_1},\ldots,\tta_{x_{n-1}})$-summable. However, $\tilde{\lam}_{0}/\mu^j={\lam}_{0}/\mu^j$ is not $(\tta_{x_1},\ldots,\tta_{x_{n-1}})$-summable. By Lemma~\ref{LEM:red1} (in $n-1$ variables), the only possible choice is that $\mu$ lies in the same $H$-orbit as $d$, denoted by $\mu\sim_H d$. By the similar discussion, we have $\theta_{x_n}^{\rho+1}(\mu)\sim_H d$ since~$\tilde \lambda_{\rho +1}\neq 0$. Hence $\mu\sim_H \theta_{x_n}^{\rho+1}(\mu)$, which leads to a contradiction since $\rho$ is a nonnegative integer and~$\rank G_{\mu}/H_{\mu} = 0$. The lemma follows.
\end{proof}
We are now ready to state and prove the main theorem throughout this paper. In the pure difference case, it coincides with Theorem~5.9 in~\cite{ChenDuFang2025}.
\begin{thm}\label{THM:rank=1}
	Let $f=a/d^j\in \F(\vx)$ be of the form~\eqref{EQ:simplefraction}. Let $\{\theta_i\}_{i=1}^r(1\leq r< n)$ be a basis of $G_d$ (take $\theta_1={\bf 1}$ if $G_d=\{{\bf 1}\}$). Suppose $\theta_i=\theta_{\vx}^{\alpha_i}$ with $\alpha_i\in\Z^n$ and $\theta_i(d)=\e_id$ for some $\e_i\in\bF^*$. Then for any $\mathbf{c}=(c_1,\ldots,c_n)\in(\bF^*)^n$, $f$ is $(c_1\theta_{x_1},\ldots,c_n\theta_{x_n})$-summable in $\bF(\vx)$ if and only if $a=\sum_{i=1}^{r}\Delta_{\tta_i}(b_i)$ for some $b_i\in \F(\hx_1)[x_1]$ with $\tta_i=\e_i^{-j}\mathbf{c}^{\alpha_i}\theta_i$ and $\deg_{x_1}(b_i)<\deg_{x_1}(d)$ for all $1\leq i\leq r$.
\end{thm}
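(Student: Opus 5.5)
Sufficiency is a direct computation, and for necessity I would induct on $n$ along the lines of Theorem~5.9 in~\cite{ChenDuFang2025}. The base case $n=1$ is Lemma~\ref{LEM:unicase}: there $G_d=\{\mathbf 1\}$ because $d$ is normal, so $\theta_1=\mathbf 1$, $\tta_1=\e_1^{-j}\cdot\mathbf 1$, and the condition reduces to $a=0$.

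For sufficiency, fix a basis element $\theta_i=\theta_{\vx}^{\alpha_i}$ with $\theta_i(d)=\e_i d$. One checks that
\[\Delta_{\mathbf c^{\alpha_i}\theta_i}\Bigl(\frac{b_i}{d^j}\Bigr)=\frac{\mathbf c^{\alpha_i}\e_i^{-j}\theta_i(b_i)-b_i}{d^j}=\frac{\Delta_{\tta_i}(b_i)}{d^j}.\]
Next, $\Delta_{\mathbf c^{\alpha}\theta_{\vx}^{\alpha}}$ lies in the left ideal of $\bF(\vx)[G]$ generated by $\Delta_{c_1\theta_{x_1}},\ldots,\Delta_{c_n\theta_{x_n}}$. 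This follows from the telescoping identity $uv-1=u(v-1)+(u-1)$ and from $u^{-1}-1=-u^{-1}(u-1)$, applied to $u=c_k\theta_{x_k}$. Hence each $b_i/d^j$ contributes a $(c_1\theta_{x_1},\ldots,c_n\theta_{x_n})$-summable term, and $f=a/d^j$ is summable.

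For necessity with $n>1$, I would split on $\rank G_d/H_d$ using Lemma~\ref{LEM:Gd/Hdfree}.

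If the rank is $0$, then $G_d=H_d$. Lemma~\ref{LEM:rank=0} shows that $f$ is $(c_1\theta_{x_1},\ldots,c_{n-1}\theta_{x_{n-1}})$-summable. Viewing this over $\bK=\bF(x_n)$ in the variables $x_1,\ldots,x_{n-1}$, I apply the induction hypothesis. The isotropy group of $d$ in $H$ is $H_d=G_d$, with the same basis, the same $\e_i$ and the same constants $\mathbf c^{\alpha_i}$, since the $x_n$-component of each $\alpha_i$ is $0$. One point needs care: $d$ may factor over $\bK$, and the isotropy groups must then be compared. Since $d\in\bF[\vx]$ is irreducible, I would argue that this causes no trouble, with the relevant orbital components handled as in the proof of Lemma~\ref{LEM:rank=0}. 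The induction hypothesis then gives $b_i\in\bF(\hx_1)[x_1]$ with $\deg_{x_1}(b_i)<\deg_{x_1}(d)$, as required.

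If the rank is $1$, choose a basis of $G_d$ of the form $\theta_1,\ldots,\theta_{r-1},\theta_r$, where $\theta_1,\ldots,\theta_{r-1}$ is a basis of $H_d$ and $\theta_r=\theta_{\vx}^{\alpha_r}$ has $n$-th exponent $t>0$ equal to the generator of $G_d/H_d$. Changing basis is harmless, because the condition in the theorem depends only on the subgroup $G_d$ and transforms covariantly under a change of basis. Starting from $f=\sum_i\Delta_{c_i\theta_{x_i}}(g_i)$ with all $g_i\in V_{[d]_G,j}$ by Lemma~\ref{LEM:red1}, I reduce $g_n$ modulo $\Delta_{c_1\theta_{x_1}},\ldots,\Delta_{c_{n-1}\theta_{x_{n-1}}}$ as in the proof of Lemma~\ref{LEM:rank=0}. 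This writes $g_n$ as a sum of terms $\lambda_\ell/\theta_{x_n}^{\ell}(\mu)^j$.

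The $H$-orbits $[\theta_{x_n}^{\ell}(\mu)]_H$ now coincide exactly when the exponents $\ell$ agree modulo $t$. I would use $\theta_r$ (rather than $\theta_{x_n}^{t}$) to collapse each residue class modulo $t$ onto $t$ representatives, one of which is $d$ itself. By uniqueness of the orbital decomposition with respect to $H$, only the component in $[d]_H$ survives, and the others must be separately $(c_1\theta_{x_1},\ldots,c_{n-1}\theta_{x_{n-1}})$-summable. This should yield
\[\frac{a-\Delta_{\tta_r}(b_r)}{d^j}\quad\text{is }(c_1\theta_{x_1},\ldots,c_{n-1}\theta_{x_{n-1}})\text{-summable}\]
for some $b_r\in\bE[x_1]$ with $\deg_{x_1}(b_r)<\deg_{x_1}(d)$. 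Applying the induction hypothesis, over $\bF(x_n)$, to this fraction with isotropy group $H_d$ then produces $b_1,\ldots,b_{r-1}$.

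The main obstacle is this rank-$1$ bookkeeping. I must show that, after folding by $\theta_r$, the contributions from the residue classes other than the one containing $d$ cancel or are summable. I must also track the constant $\e_r^{-j}\mathbf c^{\alpha_r}$ correctly through the reduction formula~\eqref{EQ:redformula}. My first attempt would be to define $\mu$ so that $\mu=d$ and to apply~\eqref{EQ:redformula} with $(c,\theta)=(\mathbf c^{\alpha_r},\theta_r)$, rewriting every $\theta_{x_n}^{\ell}(d)$ with $t\mid\ell$ as an $H$-shift of $d$.
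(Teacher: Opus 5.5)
You have a genuine gap, and it sits in the rank-$1$ case. Your sufficiency argument, base case and rank-$0$ case match the paper; the worry about $d$ splitting over $\bF(x_n)$ is vacuous by Gauss's lemma. The rank-$1$ case is the real content of the theorem, and you leave it as an unresolved ``obstacle''. The step you do sketch also does not lead where you claim.

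Write $\theta_r=\theta_{x_1}^{-t_1}\cdots\theta_{x_{n-1}}^{-t_{n-1}}\theta_{x_n}^{t}$, and reduce $g_n$ modulo the $H$-differences to $\sum_{\ell=0}^{t-1}\lam_\ell/\theta_{x_n}^{\ell}(d)^j$. Then the $[d]_H$-component of $f-\Delta_{c_n\theta_{x_n}}(g_n)$ has numerator $a+\lam_0-\e_r^{-j}(c_1\theta_{x_1})^{-t_1}\cdots(c_{n-1}\theta_{x_{n-1}})^{-t_{n-1}}c_n\theta_{x_n}(\lam_{t-1})$. This involves $\lam_{t-1}$ and is not of the form $a-\Delta_{\tta_r}(b_r)$. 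So ``only the component in $[d]_H$ survives'' cannot by itself give your displayed conclusion.

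The components in the orbits $[\theta_{x_n}^{\ell}(d)]_H$, $1\le\ell\le t-1$, are automatically $(c_1\theta_{x_1},\ldots,c_{n-1}\theta_{x_{n-1}})$-summable by uniqueness of the orbital decomposition. Nothing needs to be shown about them; they must be \emph{used}.
\begin{itemize}
\item Since $H_{\theta_{x_n}^{\ell}(d)}=H_d$ with the same $\e_i$, the induction hypothesis turns them into $\lam_\ell\equiv c_n\theta_{x_n}(\lam_{\ell-1})$ modulo $W=\{\sum_{i<r}\Delta_{\tta_i}(b_i)\mid b_i\in\bE[x_1],\ \deg_{x_1}(b_i)<\deg_{x_1}(d)\}$.
\item As $W$ is $G$-invariant, these congruences chain to $\lam_{t-1}\equiv(c_n\theta_{x_n})^{t-1}(\lam_0)$.
\item Substitute this into the $[d]_H$-congruence and use $(c_n\theta_{x_n})^{t}=(c_1\theta_{x_1})^{t_1}\cdots(c_{n-1}\theta_{x_{n-1}})^{t_{n-1}}\,\mathbf c^{\alpha_r}\theta_r$. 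This gives $a\equiv\e_r^{-j}\mathbf c^{\alpha_r}\theta_r(\lam_0)-\lam_0=\Delta_{\tta_r}(\lam_0)$, so $b_r=\lam_0$.
\end{itemize}
Alternatively, run the same chain modulo the space of $(c_1\theta_{x_1},\ldots,c_{n-1}\theta_{x_{n-1}})$-summable functions, which is also stable under $\theta_{x_n}$ and under $H$-shifts. Then invoke the induction hypothesis once at the end, as you intended.

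Your suggested device, applying \eqref{EQ:redformula} with $(c,\theta)=(\mathbf c^{\alpha_r},\theta_r)$, points in the wrong direction. It splits off terms $\Delta_{\mathbf c^{\alpha_r}\theta_r}(v)$, which in general are not $(c_1\theta_{x_1},\ldots,c_{n-1}\theta_{x_{n-1}})$-summable, so the reduction to $n-1$ operators is lost. The element $\theta_r$ should enter only through the denominator identity $\theta_{x_n}^{t}(d)=\e_r\,\theta_{x_1}^{t_1}\cdots\theta_{x_{n-1}}^{t_{n-1}}(d)$. That identity folds $\theta_{x_n}^{t}(d)$ back onto $d$ using the reduction formula with the $H$-generators only. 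The $\tta_r$-difference then comes out of the chained congruences, not out of a reduction step.
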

The above theorem reduces the number of difference operators in the summability problem. Before proving it, we first show that this conclusion is preserved by any basis exchange of $G_d$ in Lemma~\ref{LEM:exchangebases}, so that it is sufficient to prove Theorem~\ref{THM:rank=1} for a special basis. Next, we use induction and organize the proof in two cases according to $\rank G_d/H_d=0$ or $1$.

\begin{lem}\label{LEM:exchangebases}
	Let $\{\omega_i\}_{i=1}^r$, $\{\eta_i\}_{i=1}^r(r\geq 1)$ be two bases of $G_d$ such that for each $1\leq i\leq r$, $\omega_i(d)=\varepsilon_i d$, $ \eta_i(d)=e_i d$ for some $\varepsilon_i, e_i\in \bF^*$. Suppose~$\omega_i=\theta_{\vx}^{\alpha_i}$, $ \eta_i=\theta_{\vx}^{\beta_i}$ with $\alpha_i, \beta_i\in\Z^n$. Then for any $f\in \bF(\vx)$, $\mathbf{c}\in(\bF^*)^n$ and $s\in\Z$, $f$ is $(\varepsilon^s_1\mathbf{c}^{\alpha_1}\omega_1,\ldots,\varepsilon^s_r\mathbf{c}^{\alpha_r}\omega_r)$-summable in $\bF(\vx)$ if and only if $f$ is $(e^s_1\mathbf{c}^{\beta_1}\eta_1,\ldots,e^s_r\mathbf{c}^{\beta_r}\eta_r)$-summable in $\bF(\vx)$.
\end{lem}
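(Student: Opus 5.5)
By symmetry it is enough to show one direction: if $f$ is $(\varepsilon^s_1\mathbf{c}^{\alpha_1}\omega_1,\ldots,\varepsilon^s_r\mathbf{c}^{\alpha_r}\omega_r)$-summable, then it is $(e^s_1\mathbf{c}^{\beta_1}\eta_1,\ldots,e^s_r\mathbf{c}^{\beta_r}\eta_r)$-summable. The idea is to transport coefficients along the group. Define a map $\chi\colon G_d\to\bF^*$ by
\[
\chi(\theta)=\lambda_\theta^s\,\mathbf{c}^{\alpha_\theta},
\]
where $\theta(d)=\lambda_\theta d$ and $\alpha_\theta$ is the index vector of $\theta$.

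First, $\chi$ is a group homomorphism. If $\theta,\theta'\in G_d$, then
\[
\theta\theta'(d)=\theta(\lambda_{\theta'}d)=\lambda_{\theta'}\lambda_\theta d,
\]
because $\lambda_{\theta'}\in\bF$ is fixed by every element of $G$. Hence $\lambda_{\theta\theta'}=\lambda_\theta\lambda_{\theta'}$. We also have $\alpha_{\theta\theta'}=\alpha_\theta+\alpha_{\theta'}$, since $G$ is free on $\theta_{x_1},\ldots,\theta_{x_n}$. Consequently, for every $\theta\in G_d$ the operator $\tilde\theta:=\chi(\theta)\theta$ satisfies
\[
\widetilde{\theta\theta'}=\tilde\theta\,\tilde\theta'.
\]
The given operators are exactly $\tilde\omega_i$ and $\tilde\eta_i$.

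Second, I show that the set
\[
S=\Big\{\sum_{i}\Delta_{\tilde\eta_i}(g_i)\ \Big|\ g_i\in\bF(\vx)\Big\}
\]
contains $\Delta_{\tilde\theta}(g)$ for every $\theta\in G_d$ and every $g\in\bF(\vx)$. Every $\theta\in G_d$ is a word in the $\eta_i^{\pm1}$, so two identities suffice.

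For a product,
\[
\Delta_{\tilde\theta\tilde\theta'}(g)=\tilde\theta\big(\tilde\theta'(g)\big)-\tilde\theta'(g)+\tilde\theta'(g)-g=\Delta_{\tilde\theta}\big(\tilde\theta'(g)\big)+\Delta_{\tilde\theta'}(g).
\]
For an inverse,
\[
\Delta_{\tilde\theta^{-1}}(g)=-\Delta_{\tilde\theta}\big(\tilde\theta^{-1}(g)\big).
\]
Also, $\tilde{\bf 1}={\bf 1}$, so the empty word gives $\Delta_{\bf 1}=0$.

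Induction on word length then gives $\Delta_{\tilde\theta}(g)\in S$. The inverse case uses $\chi(\theta^{-1})=\chi(\theta)^{-1}$, so that $\widetilde{\theta^{-1}}=\tilde\theta^{-1}$. Applying this with $\theta=\omega_i$ writes each $\Delta_{\tilde\omega_i}(g_i)$ as an element of $S$, which finishes the proof.

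The main point to get right is that $\chi$ is multiplicative. This relies on the factors $\lambda_\theta$ and $\mathbf{c}$ lying in $\bF$, hence being $G$-constants. With that in hand, the rest is routine telescoping.
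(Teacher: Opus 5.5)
Your proposal is correct and follows essentially the paper's route. The paper writes $\eta=\omega_1^{t_1}\cdots\omega_r^{t_r}$, notes that $e=\prod_i\varepsilon_i^{t_i}$ and $\beta=\sum_i t_i\alpha_i$, and factors $e^s\mathbf{c}^{\beta}\eta-{\bf 1}$ through the operators $\varepsilon_i^s\mathbf{c}^{\alpha_i}\omega_i-{\bf 1}$ via the Laurent identity $x_1^{t_1}\cdots x_r^{t_r}-1=\sum_i(x_i-1)g_i$, proved by induction. Your word-length telescoping with the product and inverse identities proves that same identity directly, and your homomorphism $\chi$ justifies explicitly the multiplicativity of the scalar factors, which the paper uses without comment.
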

\begin{proof}
	By the symmetry of $\{\omega_i\}_{i=1}^r$ and $\{\eta_i\}_{i=1}^r$, we only need to show one direction. For the sufficiency, it is enough to prove that if $\eta=\theta_{\vx}^{\beta}\in\la\omega_1,\ldots,\omega_r\ra$ with $\eta(d)=ed$ for some $e\in\bF^*$, then
	\[e^s\mathbf{c}^\beta\eta-{\bf 1}=(\e^s_1c^{\alpha_1}_1\omega_1-{\bf 1})\bar{\omega}_1+\cdots+(\e^s_rc^{\alpha_r}_r\omega_r-{\bf 1})\bar{\omega}_r\]
	for some $\bar{\omega}_1,\ldots,\bar{\omega}_r\in \bF[G]$. If $\eta=\omega_1^{t_1}\cdots\omega_r^{t_r}$ with $t_i\in\Z$, then $\beta=\alpha_1t_1+\cdots+\alpha_rt_r$ and $e=\e_1^{t_1}\ldots\e_r^{t_r}$. So we have
	\begin{align*}
		e^s\mathbf{c}^\beta\eta-{\bf 1}&=\e_1^{st_1}\cdots\e_r^{st_r}\mathbf{c}^{\alpha_1t_1}\cdots \mathbf{c}^{\alpha_rt_r}\omega_1^{t_1}\cdots\omega_r^{t_r}-\one  \\
		&=(\e^s_1\mathbf{c}^{\a_1}\w_1)^{t_1}\cdots(\e^s_r\mathbf{c}^{\a_r}\w_r)^{t_r}-\one\\
		&=(\e^s_1\mathbf{c}^{\a_1}\w_1- \one)\bar{\w}_1+\cdots+(\e^s_r\mathbf{c}^{\a_r}\w_r-\one)\bar{\w}_r
	\end{align*}
	for some $\bar{\omega}_1,\ldots,\bar{\omega}_r\in \bF[G]$. The last equality is obtained from the identity 
	\[
		x_1^{t_1}\cdots x_r^{t_r}-1=(x_1-1)g_1+\cdots(x_r-1)g_r
	\]
	 for some $g_1,\ldots,g_r\in\bF[x_1,\ldots,x_r,x_1^{-1},\ldots,x_r^{-1}]$, which can be proved by induction on $r$.
\end{proof}

\begin{proof}[Proof of Theorem~\ref{THM:rank=1}]
	For the sufficiency, since $\theta_i(d)=\e_id$, we have
	\begin{align*}
		\frac{a}{d^j} & = \frac{\e_1^{-j}\mathbf{c}^{\a_1}\theta_1(b_1)}{\e_1^{-j}\theta_1(d)^j}-\frac{b_1}{d^j}+\cdots+\frac{\e_r^{-j}\mathbf{c}^{\a_r}\theta_r(b_r)}{\e_r^{-j}\theta_r(d)^j}-\frac{b_r}{d^j}\\
		& =\mathbf{c}^{\a_1}\theta_1\bigl(\frac{b_1}{d^j}\bigr)-\frac{b_1}{d^j}+\cdots+\mathbf{c}^{\a_r}\theta_r\bigl(\frac{b_r}{d^j}\bigr)-\frac{b_r}{d^j}.
	\end{align*}
	By the formula~\eqref{EQ:red2}, we have that $ a/d^j=\sum_{i=1}^n\Delta_{c_i\theta_{x_i}}(g_i)$ for some~$g_i\in\bF(\vx)$. 
	
	For the necessity, we proceed by induction on the number of ($q$-)difference operators $n$.
	We begin the induction with $n=1$. The problem is to decide the $c_1\theta_{x_1}$-summability of $f=a/d^j\in\bF(x_1,\ldots,x_n)$. Then the subgroup $G_d$ of~$G=\la \theta_{x_1} \ra$ is trivial and $\theta_1=\one$, since $d$ is normal with respect to~$\theta_{x_1}$ and of positive degree in $x_1$. In this case, $\e_1=1,\a_1=(0),\mathbf{c}^{\a_1}=1$, and $f$ is $c_1\theta_{x_1}$-summable in $\bF(x_1,\ldots,x_n)$ if and only if $a=0$ by Lemma~\ref{LEM:unicase} (replace $\bF$ by $\bE$). Now let us assume $n>1$ and formulate the inductive hypothesis for $n-1$:
	
	{\em If $\{\w_i\}_{i=1}^{s}$ is a basis of $H_d$ and $\w_i=\theta_{\vx}^{\beta_i}$, $\w_i(d)=e_id$ for some $\beta_i\in\Z^n$ and~$e_i\in\bF^*$, then $f$ is $(c_1\theta_{x_1},\ldots,c_{n-1}\theta_{x_{n-1}})$-summable in $\bF(x_1,\ldots,x_n)$ if and only if $a=\sum_{i=1}^{s}\Delta_{\tilde{\w}_i}(b_i)$ for some $b_i\in \F(\hx_1)[x_1]$ with $\tilde{\w}_i=e_i^{-j}\mathbf{c}^{\beta_i}\w_i$ and $\deg_{x_1}(b_i)<\deg_{x_1}(d)$ for all $1\leq i\leq s$.}
	
	We shall proceed by a case-by-case analysis according to the rank of $ G_d/H_d$.
	If $\rank G_d/H_d=0$, then $H_d=G_d$. Lemma~\ref{LEM:rank=0} implies that $f$ is
	$(c_1\theta_{x_1},\ldots,c_n\theta_{x_n})$-summable in $
	\bF(\vx)$ if and only if $f$ is $(c_1\theta_{x_1},\ldots,c_{n-1}\theta_{x_{n-1}})$-summable in~$\bF(\vx)$.
	So the assertion is true by the inductive hypothesis.
	If $\rank G_d/H_d=1$, using Lemma~\ref{LEM:exchangebases}, we may assume $\{\theta_i\}_{i=1}^{r}$ is a basis of $G_d$ such that
	$H_d=\la\theta_1,\ldots,\theta_{r-1}\ra$ and $G_d/H_d=\la\bar{\theta}_r\ra$. (If $r=1$, then $H_d=\{\one\}$.)
	By the argument of Lemma~\ref{LEM:Gd/Hdfree}, we can write $\theta_r=\theta_{\vx}^{\alpha_r}$ as $\theta_{x_1}^{-t_1}\cdots\theta_{x_{n-1}}^{-t_{n-1}}\theta_{x_n}^{t_n}$ with~$t_n$ being the smallest positive integer among all elements in $G_d$. By Lemma~\ref{LEM:red1}, we can assume $f=\Delta_{c_1\theta_{x_1}}(g_1)+\cdots+\Delta_{c_n\theta_{x_n}}(g_n)$ with~$g_i\in V_{[d]_G,j}$. Here $g_n$ can be decomposed as
	\begin{equation*}
		g_n=\sum_{i=1}^{n-1}\Delta_{c_i\theta_{x_i}}(u_i)+\sum_{\ell=0}^{t_n-1}\frac{\lam_\ell}{\theta_{x_n}^{\ell}(d)^j},
	\end{equation*}
	where $u_i\in \F(\vx)$ and $\lam_\ell\in\F(\hx_1)[x_1]$ with $\deg_{x_1}(\lam_{\ell})<\deg_{x_1}(d)$. Then we have
	\begin{equation}\label{EQ:rk=1_red}
		f-\Delta_{c_n\theta_{x_n}}\biggl(\sum_{\ell=0}^{t_n-1}\frac{\lam_\ell}{\theta_{x_n}^{\ell}(d)^j}\biggr)=\sum_{i=1}^{n-1}\Delta_{c_i\theta_{x_i}}(h_i),
	\end{equation}
	where $h_i=g_i+\Delta_{c_n\theta_{x_n}}(u_i)$.
	Note that $\theta_{x_n}^{t_n}(d)=\e_r\theta_{x_1}^{t_1}\cdots\theta_{x_{n-1}}^{t_{n-1}}(d)$.
	Applying the reduction formula~\eqref{EQ:redformula} to simplify~\eqref{EQ:rk=1_red}, we get
	\begin{equation}\label{EQ:rk=1_red2}
		\tilde{f}:=\sum_{\ell=0}^{t_n-1}\frac{\tilde{\lam}_\ell}{\theta_{x_n}^{\ell}(d)^j}=\sum_{i=1}^{n-1}\Delta_{c_i\theta_{x_i}}(\tilde{h}_i)
	\end{equation}
	for some rational functions $\tilde{h}_i\in\F(\vx)$, where $\tilde{\lam}_\ell=\lam_\ell-c_n\theta_{x_n}(\lam_{\ell-1})$ for $1\leq\ell\leq t_n-1$ and 
	\[\tilde{\lam}_0=a+\lam_{0}-\e_r^{-j}(c_1\theta_{x_1})^{-t_1}\cdots(c_{n-1}\theta_{x_{n-1}})^{-t_{n-1}}c_n\theta_{x_n}(\lam_{t_{n}-1}).\]
	This means that $\tilde{f}$ is $(c_1\theta_{x_1},\ldots,c_{n-1}\theta_{x_{n-1}})$-summable in $\bF(\vx)$. Notice that $[d]_H,[\theta_{x_n}(d)]_H,\ldots,[\theta_{x_n}^{t_n-1}(d)]_H$ are distinct $H$-orbits due to the minimality of $t_n$. By Lemma~\ref{LEM:red1}, it implies that each $\tilde{\lambda}_\ell/\theta_{x_n}^{\ell}(d)^j$ is $(c_1\theta_{x_1},\ldots,c_{n-1}\theta_{x_{n-1}})$-summable in $\bF(\vx)$ for~$0\leq \ell\leq t_n-1$. Let $W$ denote the $\bF$-vector subspace of $\bF(\vx)$ consisting of all elements in the form of $\sum_{i=1}^{r-1}\Delta_{\tta_i}(b_i)$ with $\tta_i=\e_i^{-j}\mathbf{c}^{\alpha_i}\theta_i$, $b_i\in \bE[x_1]$, and $\deg_{x_1}(b_i)<\deg_{x_1}(d)$ for~$1\leq i\leq r-1$. (If $r=1$, take $W=\{0\}$.) Since $H_{d}=H_{\theta_{x_n}^\ell(d)}$ for $\ell=0,\ldots,t_n-1$, we apply the inductive hypothesis for each rational function~$\tilde{\lambda}_\ell/\theta_{x_n}^{\ell}(d)^j$ and conclude that
	\begin{equation*}
		\left\{ \begin{array}{l}
			0 \equiv a + \lam_{0}-\e_r^{-j}(c_1\theta_{x_1})^{-t_1}\cdots(c_{n-1}\theta_{x_{n-1}})^{-t_{n-1}}c_n\theta_{x_n}(\lam_{t_{n}-1}),\\
			0 \equiv \lambda_1 - c_n\theta_{x_n}(\lambda_{0}),\\
			\qquad\qquad\vdots\\
			0 \equiv \lambda_{t_n-1} - c_n\theta_{x_n}(\lambda_{t_n-2}) ,
		\end{array} \right.
	\end{equation*}
	where $\equiv$ means the congruence relation modulo~$  W$.
	Since $W$ is $G$-invariant, the above system of congruences leads to
	\begin{align*}
		a & \equiv\e_r^{-j}(c_1\theta_{x_1})^{-t_1}\cdots(c_{n-1}\theta_{x_{n-1}})^{-t_{n-1}}(c_n\theta_{x_n})^{t_n}(\lam_{0})-\lambda_0 \\
		& \equiv \e_r^{-j}\mathbf{c}^{\alpha_r}\theta_r(\lambda_0)-\lambda_0.
	\end{align*}
	This completes the proof.
\end{proof}

\begin{rem}\label{REM:mixeda}
	By Proposition~\ref{PROP:G/Gpfree}, we get $G_d=G_d^{\si}\oplus G_d^\tau$, so there exists a basis $\{\si_1,\ldots,\si_{r_1},\tau_1,\ldots,\tau_{r_2}\}$ of $G_d$ such that $\{\si_1,\ldots,\si_{r_1}\}$, $\{\tau_1,\ldots,\tau_{r_2}\}$ are bases of $G_d^{\si}$ and $G_d^\tau$ respectively. Then $\si_i(d)=d$ and $\tau_\ell(d)=q^{\nu_\ell}d$ for some $\nu_\ell\in\Z$. By Theorem~\ref{THM:rank=1}, we obtain that $a/d^j$ is $(\si_{y_1},\ldots,\si_{y_k},\tau_{q,z_1},\ldots,\tau_{q,z_m})$-summable in~$\bF(\vx)$ if and only if there exist $b_1,\ldots,b_{r_1}\in \bE[x_1]$ with $\deg_{x_1}(b_i)<\deg_{x_1}(d)$ and $\lam_1,\ldots,\lam_{r_2}\in\bE[x_1]$ with $\deg_{x_1}(\lambda_\ell)<\deg_{x_1}(d)$ such that
	\begin{equation}\label{EQ:sum-cri-1}
		a=\sum_{i=1}^{r_1}\Delta_{\si_i}(b_i)+\sum_{\ell=1}^{r_2}\Delta_{\tilde\tau_\ell}(\lambda_\ell),
	\end{equation}
	where $\tilde\tau_\ell=q^{-j\nu_\ell}\tau_\ell$. 
	By the similar discussion as in Lemma~\ref{LEM:red1}, 
	the summability criterion for~$a/d^j$ can be refined as 
	\[a \text{ is } (\si_1,\ldots,\si_{r_1},\tilde \tau_1,\ldots, \tilde \tau_{r_2})\text{-summable in } \bF(\vx).\] 
\end{rem}
In pure difference case, Example~5.11 in~\cite{ChenDuFang2025} verifies that for positive integers $s$ and $n$, the function $f=1/(x_1^s+\cdots+x_n^s)$ is summable if and only if $s=1$ and $n>1$. We now show that $f$ is always $q$-summable in the following example, for which the bivariate case has been shown by Example~3.19 of~\cite{ChenSinger2014}.

\begin{exam}\label{EX:qsum}
	We consider the $(\tau_{q,x_1},\ldots, \tau_{q,x_n})$-summability of~$f$ in~$\bQ(x_1,\ldots,x_n)$, where
	\[
	f := \frac{1}{x_1^s+\cdots+x_n^s} \ \text{with} \ s,n \in \bN^{+}.   
	\]
	Let $G_d$ be the isotropy group of $d := x_1^s+\cdots+x_n^s$ in~$\la \tau_{q,x_1},\ldots, \tau_{q,x_n} \ra$.
	When $n=1$, applying Equation~\eqref{EQ:q-sumpoly} yields
	\[
	f = \frac{1}{x_1^s} = \Delta_{\tau_{q,x_1}}\Bigl( \frac{1/(q^{-s}-1)}{x_1^s}\Bigr).
	\]
	For $n >1$,
	it is easy to check that $G_d$ is generated by~$\tau := \tau_{q,x_1}\cdots\tau_{q,x_n} $ with~$\tau(d) = q^s d$. By Theorem~\ref{THM:rank=1}, $f$ is $(\tau_{q,x_1},\ldots, \tau_{q,x_n})$-summable if and only if $1 = q^{-s}\tau(b)-b$ for some $b \in \bQ(\hat{\vx}_1)[x_1]$ with~$\deg_{x_1}(b) < s$. Taking $b = \frac{1}{q^{-s}-1}$ satisfies the condition and then we have
	\begin{align}\label{EQ:examq}
		f =&\, \tau_{q,x_1}\cdots\tau_{q,x_n}\Bigl(\frac{b}{d}\Bigr) - \frac{b}{d} \nonumber \\
		=&\, \Delta_{\tau_{q,x_1}}\left( \tau_{q,x_2}\cdots\tau_{q,x_n} \Bigl(\frac{b}{d}\Bigr)\right)
		+ \Delta_{\tau_{q,x_2}}\left( \tau_{q,x_3}\cdots\tau_{q,x_n} \Bigl(\frac{b}{d}\Bigr)\right) \nonumber\\
		&+\cdots + \Delta_{\tau_{q,x_n}}\Bigl(\frac{b}{d}\Bigr).
	\end{align}
	Hence $f$ is $(\tau_{q,x_1},\ldots, \tau_{q,x_n})$-summable for any~$s,n \in \bN^+$.
\end{exam}

\section{Difference transformations}\label{SEC:tranformation}
In this section, we shall construct an $\bF$-endomorphism of $\bF(\vx)$ that can transfer the summability problem for general operators of $G$ into the usual case. 
It can be constructed by considering the difference case and $q$-difference case, separately.
Note that the pure difference case is established in~\cite[Proposition~5.12]{ChenDuFang2025}, which can be restated as follows.

\begin{prop}\label{PROP:diffhomo}
	Let $f\in\bK(\vy)$ with $\bK=\bF(\vz)$ and $\{\si_i\}_{i=1}^{r}(1\leq r\leq k)$ 
	be a family of independent elements in~$G^\si$. Then there exists a $\bK$-automorphism~$\phi$ of $\bK(\vy)$ such that $\phi\circ\si_i=\si_{y_i}\circ\phi$ for all $1\leq i\leq r$ and therefore $f$ is $(\si_1,\ldots,\si_{r})$-summable in $\bK(\vy)$ if and only if $\phi(f)$ is $(\si_{y_1},\ldots,\si_{y_{r}})$-summable in $\bK(\vy)$.
\end{prop}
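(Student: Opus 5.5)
We construct $\phi$ as an affine-linear change of variables over $\bK$. Write each $\si_i=\si_{\vy}^{\alpha_i}$ with $\alpha_i\in\Z^k$. Let $A\in\Z^{r\times k}$ be the integer matrix whose rows are $\alpha_1,\ldots,\alpha_r$. Independence of the $\si_i$ means that $A$ has full row rank $r$. Over $\Q\subseteq\bF$ (characteristic zero), we extend $A$ to an invertible matrix $M\in\mathrm{GL}_k(\Q)$ whose first $r$ rows are the $\alpha_i$; for example, we append standard unit vectors completing a basis.

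Next we look for a $\bK$-automorphism of the form $\psi(\vy)=N\vy$, i.e.\ $\psi(y_\ell)=\sum_{m}N_{\ell m}y_m$ with $N\in\mathrm{GL}_k(\Q)$, and set $\phi:=\psi$. Since $\si_i$ acts by $\vy\mapsto\vy+\alpha_i^{T}$ and $\si_{y_i}$ by $\vy\mapsto\vy+e_i$, the intertwining relation $\phi\circ\si_i=\si_{y_i}\circ\phi$ should be checked on generators. On one side, $\phi(\si_i(y_\ell))=\phi(y_\ell+\alpha_{i,\ell})=(N\vy)_\ell+\alpha_{i,\ell}$. On the other side, $\si_{y_i}(\phi(y_\ell))=(N\vy)_\ell+N_{\ell i}$. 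The relation therefore holds exactly when the $i$-th column of $N$ equals $\alpha_i^T$, i.e.\ $N=M^T$ up to the choice of the remaining columns. This $N$ is invertible, so $\phi$ is an automorphism fixing $\bK$. Both sides are $\bK$-algebra homomorphisms agreeing on the generators $y_\ell$, so they agree on all of $\bK(\vy)$. The key point is that a rational invertible matrix suffices; no integrality of the inverse is required, because we only need a field automorphism, not a lattice isomorphism.

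Finally, the summability equivalence follows formally. If $f=\sum_i\Delta_{\si_i}(g_i)$, then
\[\phi(f)=\sum_i\bigl(\phi\si_i(g_i)-\phi(g_i)\bigr)=\sum_i\Delta_{\si_{y_i}}(\phi(g_i)).\]
Conversely, since $\phi^{-1}\circ\si_{y_i}=\si_i\circ\phi^{-1}$, the same computation applied with $\phi^{-1}$ gives the reverse implication.

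The only step needing care is verifying the intertwining on generators with the correct orientation, i.e.\ whether the columns of $N$ or of $N^{-1}$ carry the $\alpha_i$. I will settle this by the direct computation above on the linear forms $(N\vy)_\ell$, and adjust the transpose if needed.
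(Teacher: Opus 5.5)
Your argument is correct and follows essentially the paper's route. The paper imports this result from \cite[Proposition~5.12]{ChenDuFang2025}, and its own $q$-analogue, Proposition~\ref{PROP:qdiffhomo}, is the multiplicative version of your construction: complete the exponent vectors $\alpha_1,\ldots,\alpha_r$ to a $\Q$-basis, substitute with the transposed matrix, and check the intertwining on generators, so your $N=M^{T}$ corresponds to $\varphi(\vz)=\vz^{B}$. You are also right that a rational, not necessarily unimodular, matrix suffices here: in the additive case the inverse substitution stays inside $\bK(\vy)$, whereas in the $q$-case rational exponents force the detour through $\overline{\bK(\vz)}$ and Lemma~\ref{LEM:sumalgclosure}.
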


In the $q$-difference case, we may need to extend the $q$-summability in $\bK(\vz)$ with $\bK=\bF(\vy)$ to its algebraic closure $\overline{\bK(\vz)}$, 
because the inverse of a $\bK$-automorphism may involve radical expressions. 
For each $i\in\{1,\ldots,m\}$, let $\tau_{q,z_i}$ denote an arbitrary extension of $\tau_{q,z_i}$ to $\overline{\bK}$-automorphism of $\overline{\bK(\vz)}$. The following lemma is a natural generalization of Theorem~3.2 in~\cite{ChenSinger2014} from the summability in bivariate case to that in multivariate case.
\begin{lem}\label{LEM:sumalgclosure}
	Let $f\in \bK(\vz)$ with $\bK=\bF(\vy)$ and $\{\tau_i\}_{i=1}^r$ be a family of some elements in $G^\tau$. Then $f$ is $(\tau_1,\ldots,\tau_r)$-summable in $\overline{\bK(\vz)}$ if and only if $f$ is $(\tau_1,\ldots,\tau_r)$-summable in $\bK(\vz)$.
\end{lem}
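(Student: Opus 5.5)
One direction is immediate, since $\bK(\vz)\subseteq\overline{\bK(\vz)}$ and each $\tau_i$ on $\overline{\bK(\vz)}$ extends the corresponding automorphism of $\bK(\vz)$. The work is in the converse. Suppose
\[
f=\sum_{i=1}^r\bigl(\tau_i(g_i)-g_i\bigr)
\]
with $g_i\in\overline{\bK(\vz)}$. I will descend the $g_i$ to $\bK(\vz)$ by a trace argument, adapting the proof of Theorem~3.2 in~\cite{ChenSinger2014}.

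First, all $g_i$ together with all their images under the relevant extensions generate a field $L$ of finite degree over $\bK(\vz)$. We cannot simply take the field generated by the $g_i$, because the extended $\tau_{q,z_\ell}$ need not preserve it. So I will enlarge it: let $M$ be the normal closure over $\bK(\vz)$ of $\bK(\vz)(g_1,\ldots,g_r)$. Any automorphism of $\overline{\bK(\vz)}$ that restricts to an automorphism of $\bK(\vz)$ sends a finite normal extension $M$ to another finite normal extension $\tau(M)$. These need not coincide.

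To handle this I will work with the $\tau_{q,z_\ell}$ individually. Each $\tau_i$ is a product of the $\tau_{q,z_\ell}^{\pm1}$, and the extensions to $\overline{\bK(\vz)}$ are fixed. Let $N$ be the compositum of all $\tau_{q,z}^{\alpha}(M)$ for $\alpha\in\Z^m$. If I can show this compositum is finite over $\bK(\vz)$, then $N$ is a finite Galois extension stable under every $\tau_i$.

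I expect this finiteness to be the main obstacle. My plan is to use a degree/ramification bound. The field $M$ is unramified outside a finite set of irreducible polynomials $p$ (and infinity), of bounded degree. The shifted fields are unramified outside the shifted divisors. The branch loci could therefore drift, so finiteness can genuinely fail in general, and a cleaner route is needed.

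The alternative I would adopt instead is the trace from $M$ to $\bK(\vz)$ twisted by embeddings. Set
\[
T(h)=\frac1{[M:\bK(\vz)]}\sum_{\rho}\rho(h),
\]
where $\rho$ runs over the $\bK(\vz)$-embeddings of $M$ into $\overline{\bK(\vz)}$. This $T$ is defined on all of $M$, is $\bK(\vz)$-linear, and fixes $f$. The key identity I need is $T(\tau_i(g))=\tau_i(T(g))$. To get it, note that $\tau_i(M)\cong M$ over the map $\tau_i|_{\bK(\vz)}$. Conjugating, the embeddings of $\tau_i(M)$ are exactly $\tau_i\rho\tau_i^{-1}$, and the trace of $\tau_i(M)$ over $\bK(\vz)$ restricted to elements $\tau_i(g)$ equals $\tau_i\circ T$.

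For $\tau_i(g_i)$ and $g_i$ I would therefore use the trace over the compositum $M\tau_i(M)$, which is finite Galois. On $M$ this trace agrees with $T$, and on $\tau_i(M)$ it agrees with $\tau_iT\tau_i^{-1}$, because traces are transitive with the normalization above. Applying this normalized trace $\mathrm{Tr}$ (over the compositum of all $M\tau_i(M)$, still finite) to the identity gives
\[
f=\mathrm{Tr}(f)=\sum_i\bigl(\tau_i(T(g_i))-T(g_i)\bigr),
\]
with $T(g_i)\in\bK(\vz)$. This is the desired summability in $\bK(\vz)$. The step requiring care is checking that $\mathrm{Tr}\circ\tau_i=\tau_i\circ\mathrm{Tr}$ on $M$. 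This uses only that $\tau_i$ maps $\bK(\vz)$ to itself and permutes the $\bK(\vz)$-embeddings after conjugation, together with characteristic zero to make the normalized trace well defined.
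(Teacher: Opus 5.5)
Your final argument is correct: the normalized trace on the finite compositum $M\tau_1(M)\cdots\tau_r(M)$ restricts to $T$ on $M$ and to $\tau_i T\tau_i^{-1}$ on $\tau_i(M)$, so applying it to $f=\sum_i(\tau_i(g_i)-g_i)$ gives $f=\sum_i(\tau_i(T(g_i))-T(g_i))$ with $T(g_i)\in\bK(\vz)$; the paper gives no proof of its own, only pointing to Theorem~3.2 of~\cite{ChenSinger2014}, and your trace descent is a valid way to supply the multivariate argument. You can drop the detour through the full $\Z^m$-orbit of $M$ (that compositum is indeed infinite in general, e.g.\ for $\sqrt{z_1-1}$ under $\tau_{q,z_1}$), and normality of $M$ is not needed either: averaging the conjugates of $h$ over $\bK(\vz)$ already commutes with every automorphism of $\overline{\bK(\vz)}$ that maps $\bK(\vz)$ onto itself, because such an automorphism sends the minimal polynomial of $h$ to that of its image.
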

Now we introduce some notation in $\overline{\bK(\vz)}$, which will be used in Proposition~\ref{PROP:qdiffhomo}.
For any $h \in \{q, z_1,\ldots,z_m \}$, we fix once and for all a compatible system
\(\{h^{1/t}\}_{t\in\N^+}\subseteq \overline{\bK(\vz)}\) of roots, where
\(h^{1/t}\) is a root of \(X^t-h\) and $(h^{1/t_1})^s=h^{1/t_2}$
if $t_1=s\cdot t_2,\ s\in\N^+$.
Such a compatible system exists since one can first fix~$\{a_n\}_{n \in \bN^+} \subseteq \overline{\bK(\vz)}$ such that $a_1 = h$ and~$a_{n+1}^{n+1} = a_n$. 
For any~$t \in \bN^+$, choose a sufficiently large~$n$ such that~$t \mid n!$, and then set~$h^{1/t} := a_n^{n!/t}$.
Based on this compatible root system~$\{h^{1/t}\}_{t\in\N^+}$, let~$ h^{1/(-t)}  := 1/h^{1/t}$.
For $s, t, \ell, \rho \in \bZ$ with~$t, \rho \neq 0$, set 
\[
h^{s/t}:=(h^{1/t})^s \quad \text{and} \quad (h^{s/t})^{\ell /\rho} := h^{s\ell /t \rho}.
\]
Hence we also have the general rules 
\[
h^{\frac{s \ell}{s \rho}}=h^{\frac{\ell}{\rho}}\quad \text{and} \quad h^{\frac{s}{t}}\cdot h^{\frac{\ell}{\rho}}=h^{\frac{s}{t}+\frac{\ell}{\rho}}.
\]
In this sense, define~$\tau_{q,z_i}(z_i^{s/t}):=q^{s/t}\cdot z_i^{s/t}$ for~$1\leq i \leq m$. 

\begin{prop}\label{PROP:qdiffhomo}
Let $f\in\bK(\vz)$ with $\bK=\bF(\vy)$ and $\{\tau_i\}_{i=1}^{r}(1\leq r\leq m)$ be a family of independent elements in $G^\tau$. Then there exists a $\bK$-endomorphism $\varphi$ of $\bK(\vz)$ such that $\varphi\circ\tau_i=\tau_{q,z_i}\circ\varphi$ for all $1\leq i\leq r$ and furthermore\ $f$ is $(\tau_1,\ldots,\tau_r)$-summable in $
\bK(\vz)$ if and only if $\varphi(f)$ is $(\tau_{q,z_1},\ldots,\tau_{q,z_r})$-summable in $\bK(\vz)$.
\end{prop}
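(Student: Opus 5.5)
Write $\tau_i=\tau_{q,z_1}^{a_{i1}}\cdots\tau_{q,z_m}^{a_{im}}$ with $\alpha_i=(a_{i1},\ldots,a_{im})\in\Z^m$. Independence of the $\tau_i$ means the vectors $\alpha_1,\ldots,\alpha_r$ are $\Z$-linearly independent, so $A=(a_{i\ell})$ is an $r\times m$ integer matrix of rank $r$. I would first extend $A$ to a nonsingular integer $m\times m$ matrix by adding rows $\alpha_{r+1},\ldots,\alpha_m$ chosen among standard unit vectors. I would not insist on a unimodular completion, since $\langle\alpha_1,\ldots,\alpha_r\rangle$ need not be saturated. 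This is exactly why the paper allows an endomorphism rather than an automorphism and prepares the compatible roots.

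Next I would define $\varphi$ as a monomial substitution. Set $\varphi(z_\ell)=\prod_{i=1}^m z_i^{a_{i\ell}}$ for $1\le\ell\le m$ and let $\varphi$ fix $\bK$. This gives a $\bK$-endomorphism of $\bK(\vz)$; it is injective because $A$ is nonsingular, so the monomials $\varphi(z_\ell)$ are algebraically independent. To check the intertwining relation, note that $\tau_{q,z_i}\varphi(z_\ell)=q^{a_{i\ell}}\varphi(z_\ell)$. On the other side, $\tau_i(z_\ell)=q^{a_{i\ell}}z_\ell$, so $\varphi(\tau_i(z_\ell))=q^{a_{i\ell}}\varphi(z_\ell)$. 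Both maps are $\bK$-homomorphisms that agree on generators, so $\varphi\circ\tau_i=\tau_{q,z_i}\circ\varphi$ for $1\le i\le r$, and indeed for all $i\le m$ with $\tau_i:=\theta_{\vz}^{\alpha_i}$. The forward implication is then immediate: if $f=\sum\Delta_{\tau_i}(g_i)$, then $\varphi(f)=\sum\Delta_{\tau_{q,z_i}}(\varphi(g_i))$.

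For the converse, I would use the compatible root system to extend $\varphi$ to an automorphism $\psi$ of the subfield $L=\bK(z_1^{1/N},\ldots,z_m^{1/N})\subseteq\overline{\bK(\vz)}$, where $N=|\det A|$ (or $N$ a multiple of all the denominators involved). Let $B=A^{-1}\in\frac1N\Z^{m\times m}$. The inverse $\psi^{-1}$ sends $z_i\mapsto\prod_\ell z_\ell^{b_{i\ell}}$, and the rules $h^{s/t}h^{\ell/\rho}=h^{s/t+\ell/\rho}$ fixed above make this well defined and consistent with $\tau_{q,z_i}(z_i^{s/t})=q^{s/t}z_i^{s/t}$. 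The intertwining $\psi^{-1}\circ\tau_{q,z_i}=\tau_i\circ\psi^{-1}$ then holds on $L$, because both sides multiply the generators by the same rational powers of $q$.

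Now suppose $\varphi(f)=\sum_{i\le r}\Delta_{\tau_{q,z_i}}(h_i)$ with $h_i\in\bK(\vz)$. Applying $\psi^{-1}$ gives $f=\sum\Delta_{\tau_i}(\psi^{-1}(h_i))$, with $\psi^{-1}(h_i)\in L\subseteq\overline{\bK(\vz)}$. Lemma~\ref{LEM:sumalgclosure} then brings the summability back to $\bK(\vz)$. I expect the main obstacle to be this last step: one must verify carefully that the chosen extensions of $\tau_{q,z_i}$ to $\overline{\bK(\vz)}$ preserve $L$ and commute with $\psi^{-1}$ as claimed, so that the hypotheses of Lemma~\ref{LEM:sumalgclosure} genuinely apply. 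This amounts to checking that $\tau_i$ acting on the $z_\ell^{s/N}$ via the compatible roots of $q$ is consistent. I would verify it on the generators $z_\ell^{1/N}$ using $q^{1/N}$ from the fixed system.
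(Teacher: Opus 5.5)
This is essentially the paper's proof. Both use the monomial substitution $\varphi(z_\ell)=\prod_i z_i^{a_{i\ell}}$, then a rational-exponent left inverse into $\overline{\bK(\vz)}$ built from the compatible roots, and then Lemma~\ref{LEM:sumalgclosure}. Your completion by integral unit vectors and your algebraic-independence check make explicit two points that the paper's completion ``in $\Q^m$'' leaves implicit.

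There are two slips, and neither damages the argument.

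First, $\varphi$ need not extend to an automorphism of $L$. For $m=r=1$ and $\tau_1=\tau_{q,z_1}^2$, any extension sends $z_1^{1/2}\mapsto\pm z_1$, so its image is $\bK(z_1)\subsetneq L$.

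Second, your inverse should be $z_i\mapsto\prod_\ell z_\ell^{b_{\ell i}}$, using columns of $B=A^{-1}$ rather than rows.

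What you actually need is only a homomorphism $\bK(\vz)\to\overline{\bK(\vz)}$ satisfying $\psi^{-1}\circ\varphi=\one$; this is the paper's $\tilde\varphi$. The intertwining relation only has to be checked on the generators $z_j$, where the factor is $q^{(AB)_{ij}}=q^{\delta_{ij}}$. So the ``obstacle'' you flag, namely that $L$ be stable under the extended $\tau_{q,z_i}$, never arises. That stability in fact fails whenever $q^{1/N}\notin L$.
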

\begin{proof}
Assume $\tau_i=\tau_{q,z_1}^{b_{i,1}}\cdots\tau_{q,z_m}^{b_{i,m}}$ with $\beta_i=(b_{i,1},\ldots,b_{i,m})$ as a vector in~$\Z^m\subseteq \Q^m$ for~$i=1,\ldots,r$. 
Then $\beta_1,\ldots,\beta_r$ are linearly independent over $\Z$, and thus also over $\Q$. Hence there exist vectors $\beta_{r+1},\ldots,\beta_m$ such that $\{\beta_1,\ldots,\beta_m\}$ forms a basis of $\Q^m$. Suppose $\beta_i=(b_{i,1},\ldots,b_{i,m})$ for $i=r+1,\ldots,m$ and then $B:=(b_{i,j})\in\Q^{m\times m}$ is an invertible matrix.
Define a $\bK$-endomorphism $\varphi:\bK(\vz) \to\bK(\vz)$ by $\varphi(\vz):=\vz^B$, which means
\[u_j:=\varphi(z_j)=z_1^{b_{1,j}}\cdots z_m^{b_{m,j}} \quad \text{for all } 1\leq j\leq m.\]
Then $\varphi\circ\tau_i=\tau_{q,z_i}\circ\varphi$ for each $i\in\{1,\ldots,r\}$, since for any $h\in\bK(\vz)$
\begin{align*} 
	\varphi(\tau_i(h(z_1,\ldots,z_m)))&= h(q^{b_{i,1}}u_1,\ldots,q^{b_{i,m}}u_m)\\
	&= \tau_{q,z_i}(\varphi(h(z_1,\ldots,z_m))).
\end{align*}
This directly implies the necessity in the second assertion. On the other hand, let $A=B^{-1}:=(a_{i,j})$ and we define a $\bK$-homomorphism $\tilde\varphi: \bK(\vz)\rightarrow\overline{\bK(\vz)}$ by $\tilde\varphi(\vz):=\vz^A$, i.e., 
\[w_j:=\tilde\varphi(z_j)=z_1^{a_{1,j}}\cdots z_m^{a_{m,j}}\quad \text{for all } 1\leq j\leq m.\]  Then  $\tilde\varphi\circ\tau_{q,z_i}=\tau_i\circ\tilde\varphi$ for each $i\in\{1,\ldots,r\}$, since for any $h\in\bK(\vz)$
\begin{align*}
	\tilde\varphi(\tau_{q,z_i}(h(z_1,\ldots,z_m)))&=h(w_1,\ldots,qw_i,\ldots,w_m)\\
	&=\tau_{i}(\tilde\varphi(h(z_1,\ldots,z_m))).
\end{align*}
The second equality follows from 
\[\tau_i(w_j)=\tau_{q,z_1}^{b_{i,1}}\cdots\tau_{q,z_m}^{b_{i,m}}(z_1^{a_{1,j}}\cdots z_m^{a_{m,j}})=q^{\sum_{\ell=1}^mb_{i,\ell}a_{\ell,j}}\cdot w_j=q^{\delta_{i,j}}\cdot w_j,\]
where $\delta_{i,j}$ is the Kronecker symbol. Moreover, $\tilde\varphi\circ\varphi=\one_{\bK(\vz)}$, since $\tilde\varphi( \varphi(\vz))=\tilde\varphi(\vz^{B})=\vz^{AB}=\vz$. Now suppose $\varphi(f)=\sum_{i=1}^r\Delta_{\tau_{q,z_i}}(h_i)$ for some $h_i\in\bK(\vz)$. Let $\tilde\varphi$ act on both sides of this equation. We get \[f=\Delta_{\tau_1}(\tilde h_1)+\cdots+\Delta_{\tau_r}(\tilde h_r),\]
where $\tilde h_i=\tilde\varphi(h_i)\in\overline{\bK(\vz)}$ for $1\leq i\leq r$. By Lemma~\ref{LEM:sumalgclosure}, $f$ is also $(\tau_1,\ldots,\tau_r)$-summable in $\bK(\vz)$.
\end{proof}

\begin{rem}\label{REM:mixdiffhome}
Let $f\in\bF(\vy,\vz)$ and $\{\si_1,\ldots,\si_{r_1}\}$, $\{\tau_1,\ldots,\tau_{r_2}\}$ be independent elements in $G^\si$ and $G^\tau$ respectively. Let $\tilde\si_i=c_i \si_i$ ($1\leq i \leq r_1$) and $\tilde\tau_j=\e_j \tau_j$ ($1\leq j\leq r_2$) for some $c_i,\e_j\in\bF^*$. We can define an $\F$-endomorphism $\psi\colon\bF(\vy,\vz)\to\bF(\vy,\vz)$ by 
\[\psi(\vy,\vz):=(\phi(\vy),\varphi(\vz)),\]
where $\phi,\varphi$ are given in Propositions \ref{PROP:diffhomo} and~\ref{PROP:qdiffhomo} respectively. Then $\psi \circ \si_i=\si_{y_{i}}\circ \psi$ for all $1\leq i \leq r_1$ and $\psi \circ \tau_{j}=\tau_{q,z_j}\circ \psi$ for all $1\leq j\leq r_2$. Furthermore, $f$ is $(\tilde\si_1,\ldots,\tilde\si_{r_1},\tilde\tau_1,\ldots,\tilde\tau_{r_2})$-summable in $\bF(\vy,\vz)$ if and only if $\psi(f)$ is $(\tilde\si_{y_1},\ldots,\tilde\si_{y_{r_1}},\tilde\tau_{q,z_1},\ldots,\tilde\tau_{q,z_{r_2}})$-summable in $\bF(\vy,\vz)$, where $\tilde\si_{y_i}=c_i\si_{y_i}$ and $\tilde\tau_{q,z_j}=\e_{j}\tau_{q,z_j}$.
\end{rem}
\begin{rem}
Note that in Proposition~\ref{PROP:qdiffhomo}, $\tilde{\varphi}(h_i)$ may not lie in~$\bK(\vz)$. We show that the algebraic extension can be avoided by choosing a proper~$\varphi$ when we apply it to solve Problem~\ref{PROB:summabilityproblem}. According to Remark~\ref{REM:mixeda}, we only need to consider the case in which $\{\tau_1,\ldots,\tau_{r_2}\}$ is a basis of~$G_d^\tau$. Note that $\tilde\varphi$ is a $\bK$-automorphism of~$\bK(\vz)$ if and only if $B$ is unimodular, i.e., the rows of $B$ form a $\bZ$-basis of~$\bZ^m$. The first $r_2$ rows of $B$ are determined by the operators~$\tau_1,\ldots,\tau_{r_2}$, as in Proposition~\ref{PROP:qdiffhomo}. To obtain a unimodular matrix~$B$, the remaining rows can be derived from $\tau_{r_2+1},\ldots,\tau_m \in G^\tau$ chosen such that their images~$ \bar{\tau}_{r_2+1},\ldots,\bar{\tau}_m$ form a basis of~$G^\tau/G_d^\tau$. The existence of such a choice is guaranteed by Lemma~\ref{LEM:G2/G2pfree} together with Lemma~7.4 in~\cite[Chapter~\uppercase\expandafter{\romannumeral3}]{BookLang}. 
\end{rem}
Combining Proposition~\ref{PROP:G/Gpfree},~Theorem~\ref{THM:rank=1} 
and Remark~\ref{REM:mixdiffhome}, we can reduce the summability Problem~\ref{PROB:summabilityproblem} in $n$ variables to that in fewer variables. 
With Lemma~\ref{LEM:unicase} serving as the base case, Problem~\ref{PROB:summabilityproblem} can be completely solved.
Furthermore, the $(\tau_1,\ldots,\tau_r)$-summability problem has been solved.

\begin{exam}
We determine whether $f$ is $( \si_{y_1},\si_{y_2},\tau_{q,z_1},\tau_{q,z_2} )$-summable over the rational function field~$\bQ(y_1,y_2,z_1,z_2)$, where 
\[
f := \frac{(z_1+z_2)/y_2}{z_1^2+(y_1-y_2)z_2^2}.
\]
Note that $f$ is of the form~\eqref{EQ:simplefraction} with $y_1$ playing the role of~$x_1$.
Let $a:=(z_1+z_2)/y_2,\, d := z_1^2+(y_1-y_2)z_2^2 $, and~$G := \la \si_{y_1},\si_{y_2},\tau_{q,z_1},\tau_{q,z_2} \ra$.
Then it can be verified that 
\[
G_d = \la  \si := \si_{y_1}\si_{y_2},\tau := \tau_{q,z_1}\tau_{q,z_2} \ra \ \text{with} \ \si(d) = d \ \text{and} \ \tau(d) = q^2d.
\]
By Remark~\ref{REM:mixeda}, we have $f$ is $( \si_{y_1},\si_{y_2},\tau_{q,z_1},\tau_{q,z_2} )$-summable if and only if $a = \Delta_{\si}(b)+\Delta_{q^{-2}\tau}(\lambda)$ for some $b, \lambda \in \bQ(\vy,\vz)$. According to Remark~\ref{REM:mixdiffhome}, we define a $\bQ$-automorphism $\psi\colon\bQ(\vy,\vz)\to\bQ(\vy,\vz)$ by 
\[\psi(y_1,y_2,z_1,z_2):=(y_1,y_1+y_2,z_1,z_1z_2).\]
Then $\psi^{-1}$ is given by 
\[\psi^{-1}(y_1,y_2,z_1,z_2)=(y_1,-y_1+y_2,z_1,z_2/z_1).\]
Now we consider the $(\si_{y_1},q^{-2}\tau_{q,z_1})$-summability of~$\psi(a) = \frac{(z_2+1)z_1}{y_1+y_2}$. Let $\tilde{d} := y_1+y_2 $ and~$\tilde G := \la \si_{y_1},\tau_{q,z_1} \ra$. Then we have $\tilde{G}_{\tilde{d}} = \la \tau_{q,z_1} \ra$ with~$\tau_{q,z_1}(\tilde d)= \tilde d$. 
Applying Equation~\eqref{EQ:q-sumpoly} leads to
\[
(z_2+1)z_1 = \Delta_{q^{-2}\tau_{q,z_1}}\biggl( \frac{z_2+1}{q^{-1}-1} z_1\biggl).
\]
Thus $\psi(a)$ is $( \si_{y_1},q^{-2}\tau_{q,z_1} )$-summable, with
\[
\psi(a) =\Delta_{q^{-2}\tau_{q,z_1}}\biggl(\frac{(z_2+1) z_1/(q^{-1}-1)}{y_1+y_2}\biggr).
\]
So $a$ is $( \si, q^{-2}\tau )$-summable, with
\[
a = \Delta_{q^{-2}\tau}\circ \psi^{-1} \biggl(\frac{(z_2+1) z_1}{(q^{-1}-1)(y_1+y_2)}\biggr) = \Delta_{q^{-2}\tau} \biggl( \frac{z_1+z_2}{(q^{-1}-1)y_2}
\biggr).
\]
Finally, we conclude that  
\begin{align*}
	f &= \frac{a}{d} = \frac{\Delta_{q^{-2}\tau}(b)}{d} = \tau_{q,z_1}\tau_{q,z_2}\Bigl( \frac{b}{d}\Bigr)-\frac{b}{d} \\
	&= \Delta_{\tau_{q,z_1}}\Bigl( \tau_{q,z_2}\Bigl(\frac{b}{d}\Bigr)\Bigr)+\Delta_{\tau_{q,z_2}}\Bigl(\frac{b}{d} \Bigr), 
	\ \text{where}~b := \frac{z_1+z_2}{(q^{-1}-1)y_2}.
\end{align*}
\end{exam}

\section{Examples for applications}\label{SEC:examples}
In this section, we give two examples to illustrate how to use the summability criteria to reduce multiple sums. It provides a potential tool for determining whether a series is convergent or irrational. Through this section, we assume that~$\bF = \bQ$.
\begin{exam}[Continuing Example~\ref{EX:qsum}]\label{EX:qsumreduce}
Identity~\eqref{EQ:examq} can be applied to reduce the following $n$-fold sum,
\[
\sum_{m_1,\ldots,m_n \geq 1} \frac{1}{q^{s\cdot m_1}+\cdots+q^{s\cdot m_n}} \ \text{with} \ |q|>1, \,s,n \in \bN^{+} \ \text{and} \ n >2.
\]
In order to translate the identity~\eqref{EQ:examq} into the usual sums, we define the transformation~$\varrho\colon\bQ(x_1,\ldots,x_n) \rightarrow \bQ(q^{m_1},\ldots,q^{m_n})$ by~$\varrho(x_i)=q^{m_i}$ for any~$i=1,\ldots,n$ and~$\varrho(c) = c$ for any~$c \in \bQ$. Since $q$ is not a root of unity, $\varrho$ is a $\bQ$-isomorphism between two fields, $\bQ(x_1,\ldots,x_n)$ and~$\bQ(q^{m_1},\ldots,q^{m_n})$. Let $\si_i$'s denote the shift operators with respect to $m_i$'s, for~$i=1,\ldots,n$. Then $\varrho(\tau_{q,x_i}(f))=\si_i(\varrho(f))$ for all~$f \in \bQ(x_1,\ldots,x_n)$. Now we derive the following identity:
\begin{align}\label{EQ:examq2}
	&\sum_{m_1,\ldots,m_n \geq 1} \frac{1}{q^{s\cdot m_1}+\cdots+q^{s\cdot m_n}} \nonumber\\
	=& \, \frac{1}{q^s-1}
	\biggl(\sum_{m_2,\ldots,m_n \geq 1} \frac{1}{1+q^{s\cdot m_2}+\cdots+q^{s\cdot m_n}} \nonumber\\
	&+\sum_{m_1,m_3\ldots,m_n \geq 1} \frac{1}{q^{s(m_1-1)}+1+q^{s\cdot m_3}+\cdots+q^{s\cdot m_n}} \nonumber\\
	&+ \cdots+ \sum_{m_1,\ldots,m_{n-1}\geq 1}\frac{1}{q^{s(m_1-1)}
		+\cdots+q^{s(m_{n-1}-1)}+1}\biggr). 
\end{align}
In this way, we reduce the $n$-fold sum to several $(n-1)$-fold sums.
Considering the first sum~$\sum_{m_2,\ldots,m_n \geq 1} \frac{1}{1+q^{s\cdot m_2}+\cdots+q^{s\cdot m_n}}$ on the right-hand side of Equation~\eqref{EQ:examq2}, we study the $(\tau_{q,x_2},\ldots, \tau_{q,x_n})$-summability of
\[\tilde f :=\frac{1}{1 + x_2^s+\cdots+x_n^s}.\]
Let $\tilde d := 1 + x_2^s+\cdots+x_n^s $ and~$\tilde G := \la \tau_{q,x_2},\ldots, \tau_{q,x_n} \ra$. It is obvious that~$\tilde G_{\tilde d} = \{ \operatorname{\one} \}$. And there does not exist $\tilde b \in \bQ(x_2,\ldots,x_n)$ such that~$ 1 = \one(\tilde b)- \tilde b $. Hence, $\tilde f$ is not $(\tau_{q,x_2},\ldots, \tau_{q,x_n})$-summable and we can not reduce the first sum into an $(n-2)$-fold sum. The similar issues arise when we consider the remaining sums in Equation~\eqref{EQ:examq2}.
\end{exam}
\begin{exam}
We show that $S(q)$ converges for~$0<|q|<1$, where 
\[
S(q) := \sum_{n,k,m\geq 1}\biggl(\frac{q^{k+m}}{[n+k+m]_q} - (1-q)q^{k+m}\biggr) \ \text{with} \ [N]_q:=\frac{1-q^N}{1-q}.
\]
There is a $\bQ$-isomorphism~$\varrho\colon \bQ(x,y,z) \rightarrow \bQ(q^n,q^k,q^m)$  defined analogously as in Example~\ref{EX:qsumreduce}. Specifically, $\varrho$ is given by  
\[\varrho(x,y,z) := (q^n, q^k, q^m).\]
First, we attempt to reduce the sum~$\sum_{n,k,m\geq 1}\frac{q^{k+m}}{[n+k+m]_q}$ by considering the $(\tau_{q,x},\tau_{q,y},\tau_{q,z})$-summability of~$(1-q)yz/(1-xyz)$. By calculation,
\[
\frac{(1-q)yz}{1-xyz}=\Delta_{\tau_{q,x}}\bigl(\frac{yz}{1-q^{-1}xyz}\bigr)-\Delta_{\tau_{q,y}}\bigl(\frac{yz}{1-q^{-1}xyz} \bigr).
\]
Applying this equation yields 
\[
\sum_{n,k,m\geq 1}\frac{q^{k+m}}{[n+k+m]_q} = \sum_{k,m\geq 1} \bigl(q^{k+m}-\frac{q^{k+m}}{1-q^{k+m}}\bigr)+\sum_{n,m\geq 1} \frac{q^{m+1}}{1-q^{n+m}}.
\]
It can be verified that for any~$|q|<1$, $\sum_{k,m \geq 1}\frac{q^{k+m}}{1-q^{k+m}}$ converges and
\[\sum_{k,m \geq 1}q^{k+m} = \bigl( \sum_{k\geq 1} q^k\bigr)^2=\bigl(\frac{q}{1-q}\bigr)^2.\]
Now we turn to the double sum~$\sum_{n,m\geq 1}\frac{q^{m+1}}{1-q^{n+m}}$. Considering the $(\tau_{q,x},\tau_{q,z})$-summability of~$qz/(1-xz)$, we have
\[
\frac{qz}{1-xz} = \Delta_{\tau_{q,x}}\bigl( \frac{qz}{(1-q)(1-q^{-1}xz)}\bigr)+\Delta_{\tau_{q,z}}\bigl( \frac{-qz}{(1-q)(1-q^{-1}xz)}\bigr).
\]
Hence, we have 
\begin{align*}
	\sum_{n,m\geq 1}\bigl( \frac{q^{m+1}}{1-q^{n+m}}\bigr) &= \frac{q}{1-q} \biggl(\sum_{m \geq 1} \bigl(q^m - \frac{q^m}{1-q^m}\bigr) +\sum_{n\geq 1} \frac{q}{1-q^n} \biggr)\\
	&=\bigl(\frac{q}{1-q}\bigr)^2 -q\sum_{m\geq 1}\frac{q^m}{1-q^m} + \sum_{n\geq 1}\frac{q^2}{1-q},
\end{align*}
where $\sum_{m \geq 1}\frac{q^m}{1-q^m}$ converges and $\sum_{n\geq 1}\frac{q^2}{1-q}$ diverges.
Applying the identity $\sum_{k,m \geq 1}q^{k+m}=\bigl(\frac{q}{1-q}\bigr)^2$ once more, we obtain that
\[
\sum_{n,k,m \geq 1} (1-q)q^{k+m} = \sum_{n\geq 1}\frac{q^2}{1-q}.
\]
In conclusion, we have~
\[
S(q) = 2\bigl(\frac{q}{1-q}\bigr)^2 - \sum_{k,m \geq 1}\frac{q^{k+m}}{1-q^{k+m}} -q\sum_{m\geq 1}\frac{q^m}{1-q^m},
\]
which is convergent.
Note that $\sum_{m\geq 1}\frac{q^m}{1-q^m}$ is a special case of Lambert series, see~\cite[p.\ 168]{BookJohnson}. The irrationality of the series~$\sum_{m\geq 1}\frac{q^m}{1-q^m}$ for $0<|q|<1$ was conjectured by Chowla~\cite{Chowla1947} and was proved by Erd\H{o}s~\cite{Erdos1948} when~$q=1/p \ \text{for} \  p \in \bZ\setminus \{0, \pm1\}$. 
Under the same assumption, the irrationality of $\sum_{k,m \geq 1}\frac{q^{k+m}}{1-q^{k+m}}$ follows from~\cite{Zudilin2005}, which in turn implies the irrationality of $S(q)$. 
\end{exam}
\section{Conclusion and future work}\label{SEC:conclusion}
The long-term project aims at developing algorithms for symbolic summation of multivariate functions.  This paper resolves the rational case and thus starts a testing step towards ($q$-)Gosper's algorithm for multivariate hypergeometric terms, which is formulated as in~\cite[Problem~5.1]{ChenKauers2017}. A large class of combinatorial identities involving single sums of hypergeometric terms can be efficiently proved by Zeilberger's method of creative telescoping~\cite{Zeilberger1990c,Zeilberger1991,PWZbook1996}, which is based on Gosper's algorithm. The multivariate extension of Gosper's algorithm will be crucial for improving the efficiency of creative telescoping when it is applied to prove identities involving  multiple sums with parameters. 

\bigskip

\noindent {\bf Acknowledgements.} 
We are grateful to the anonymous referees for their comments, which helped us significantly improve the paper.

\bibliographystyle{plain}










\end{document}